\newacronym{acr:pcsp}{PCS}{Path-based Coflow Scheduling}
\newacronym{acr:cosp}{COS}{Concurrent Open Shop}
\newacronym{acr:bcsp}{BCS}{Bipartite Coflow Scheduling}
\newtheorem{definition}{Definition}
\newtheorem{theorem}{Theorem}
\newtheorem{lemma}{Lemma}
\newcommand{\st}{\text{s.t.}}
\newcommand{\eps}{\varepsilon}
\newcommand{\cik}{c_i^{(k)}}
\newcommand{\cjk}{c_j^{(k)}}
\newcommand{\Lk}{L^{(k)}}
\newcommand{\Lik}{L_i^{(k)}}
\newcommand{\Ljk}{L_j^{(k)}}
\newcommand{\Lil}{L_i^{(l)}}
\newcommand{\cijk}{c_{i,j}^{(k)}}
\newcommand{\fjk}{f_j^{(k)}}
\newcommand{\pjk}{P_j^{(k)}}
\DeclareMathOperator*{\opt}{opt}
\DeclareMathOperator*{\avg}{avg}
\begin{document}
\title{Minimization of Weighted Completion Times in Path-based Coflow Scheduling\thanks{Work supported by Deutsche Forschungsgemeinschaft (DFG), GRK 2201 and by the European Research Council, Grant Agreement No.\ 691672.}}
\author[1,2]{\small Alexander Eckl}
\author[1]{Luisa Peter}
\author[3]{Maximilian Schiffer}
\author[1]{Susanne Albers}
\affil[1]{Department of Informatics\\ Technical University of Munich\\ Boltzmannstr. 3\\ 85748 Garching, Germany\\\texttt{alexander.eckl@tum.de, luisa.peter@in.tum.de, albers@in.tum.de}}
\affil[2]{Advanced Optimization in a Networked Economy (AdONE)\\ Technical University of Munich\\ Arcisstra\ss e 21\\ 80333 Munich\\ Germany}
\affil[3]{TUM School of Management\\ Technical University of Munich\\ Arcisstra\ss e 21\\ \newline 80333 Munich\\ Germany\\
\texttt{schiffer@tum.de}}
\date{}

\maketitle
\begin{abstract}
\noindent
Coflow scheduling models communication requests in parallel computing frameworks where multiple data flows between shared resources need to be completed before computation can continue. 
In this paper, we introduce Path-based Coflow Scheduling, a generalized problem variant that considers coflows as collections of flows along fixed paths on general network topologies with node capacity restrictions. For this problem, we minimize the coflows' total weighted completion time. We show that flows on paths in the original network can be interpreted as hyperedges in a hypergraph and transform the path-based scheduling problem into an edge scheduling problem on this hypergraph. 

We present a $(2\lambda +1)$-approximation algorithm when node capacities are set to one, where $\lambda$ is the maximum number of nodes in a path. For the special case of simultaneous release times for all flows, our result improves to a $(2\lambda)$-approximation. Furthermore, we generalize the result to arbitrary node constraints and obtain a $(2\lambda \Delta +1)$- and a $(2\lambda \Delta)$-approximation in the case of general and zero release times, where $\Delta$ captures the capacity disparity between nodes. \\

\noindent
\textbf{Keywords:} Scheduling algorithms \:$\cdot$\: Approximation algorithms \:$\cdot$\: Graph algorithms \:$\cdot$\: Data center networks \:$\cdot$\: Coflow Scheduling \:$\cdot$\: Concurrent Open Shop \:$\cdot$\: Scheduling with Matching Constraints \:$\cdot$\: Edge Scheduling
\end{abstract}

\section{Introduction}

Parallel computing frameworks, e.g., MapReduce~\cite{DeanGhemawat2008}, Spark~\cite{ZahariaChowdhuryEtAl2010}, or Google Data{-}flow~\cite{GoogleDataflow2019}, are a central element of today's IT architecture and services, especially in the course of Big Data. Computing jobs in such a setting consist of a sequence of tasks, executed at different stages, such that in between data must be transferred from multiple output to multiple input ports. Herein, a consecutive task cannot start before all data is transferred to the next stage. Based on the \emph{coflow networking abstraction} (see \cite{ChowdhuryStoica2012}), we abstract such a data transfer as a set of data flows on a communication stage and refer to it as a coflow if all flows must be finished to allow for the execution of the next task of a computing job. This data transferring between computations can contribute more than 50\% to a jobs completion time (see \cite{ShiZhangEtAl2018}) such that minimizing coflow completion times remains a central challenge that heavily affects the efficiency of such environments.

So far, research on coflow scheduling has mainly focused on bipartite networks  \cite{AhmadiKhullerEtAl2017,ChowdhuryStoica2015,ChowdhuryZhongEtAl2014,QiuSteinEtAl2015}. 
Here, machines are uniquely divided into input and output ports and data can be transferred instantaneously via a direct link between each pair of input and output ports (see Figure~\ref{pic:bipartite_coflow}). Recently, research has shifted to more general network topologies. Jahanjou et al.\ \cite{JahanjouKantorEtAl2017} first introduced a variation of Coflow Scheduling where the underlying networks of machines is an arbitrary graph. Since then, this generalized problem has been considered more extensively~\cite{ChowdhuryKhullerEtAl2019,ShiZhangEtAl2018}. Applications arise for grid computing projects, i.e., inter-data center communication, where parallel computing tasks are executed on multiple but decentralized high-power computing units \cite{ChowdhuryKhullerEtAl2019}. A node in the underlying network may represent a machine, a data center, or an intermediate routing point. While these recent works indeed generalize the case of bipartite coflow scheduling, a fully generalizable approach is still missing, since infinite router/machine capacities during communication remain a central assumption in all current modeling approaches.

\begin{figure}[bt]
\centering
	\begin{subfigure}[b]{.45\textwidth}
		\centering
		\begin{tikzpicture}[main node/.style={draw,circle,inner sep=2pt,fill,thick},>=latex]
		\foreach \v in {1,2,3}{
			\node[main node] (I\v) at (0,1.5*\v-0.75) {};
		}
		\foreach \v in {1,2,3,4}{
			\node[main node] (O\v) at (3,1.5*\v-1.5) {};
		}
		\foreach \i in {1,2,3}{
			\foreach \o in {1,2,3,4}{
				\draw (I\i) -- (O\o);
			}
		}
		\node (I) at (0,5) {\footnotesize input ports};
		\node (O) at (3,5) {\footnotesize output ports};
		\draw[ultra thick,->,green!50!black] (I1) -- (O2);
		\draw[ultra thick,->,green!50!blue] (I1) -- (O4);
		\draw[ultra thick,->,green] (I2) -- (O2);
		\draw[ultra thick,->,green!50!yellow] (I3) -- (O1);
		\end{tikzpicture}
		\caption{A conventional coflow on a bipartite network}
		\label{pic:bipartite_coflow}
	\end{subfigure}\quad
	\begin{subfigure}[b]{.45\textwidth}
		\centering
		\begin{tikzpicture}[main node/.style={draw,circle,inner sep=2pt,fill,thick},>=latex]
		\node[main node] (1) at (0.5,2) {};
		\node[main node] (2) at (2,0) {};
		\node[main node] (3) at (2,4) {};
		\node[main node] (4) at (3,2) {};
		\node[main node] (5) at (4,4) {};
		\node[main node] (6) at (5,2) {};
		\draw (1) -- (2);
		\draw (1) -- (3);
		\draw (3) -- (4);
		\draw (3) -- (5);
		\draw (4) -- (5);
		\draw (4) -- (6);
		\draw (5) -- (6);
		\draw[cap=round,rounded corners,ultra thick,->,red!50!white] ([xshift=-1.5pt,yshift=-3pt]5.west) --  ([xshift=0pt,yshift=3.5pt]4.north);
		\draw[cap=round,rounded corners,ultra thick,->,red!60!blue] ([xshift=-0.75pt,yshift=2pt]4.west) -- ([yshift=-1pt]3.south) -- ([xshift=0.75pt]1.east) -- ([yshift=1pt]2.north);
		\draw[cap=round,rounded corners,ultra thick,->,red!70!black] ([yshift=1.5pt]1.north) -- ([yshift=1pt]3.north) -- ([xshift=0.75pt,yshift=1pt]5.north) -- ([xshift=1pt,yshift=1pt]6.north);
		\draw[cap=round,rounded corners,ultra thick,->,red] ([xshift=1pt,yshift=-2pt]3.east) -- ([xshift=1pt,yshift=1pt]4.north) -- ([xshift=-2.5pt,yshift=0.5pt]6.north);
		\end{tikzpicture}
		\caption{A path-based coflow on an arbitrary network}
		\label{pic:path-based_coflow}
	\end{subfigure}
	\caption{Illustration of the prevalent bipartite coflow setting as opposed to the more general concept of path-based coflows}
	\label{pic:bipartite_vs_path-based}
\end{figure}
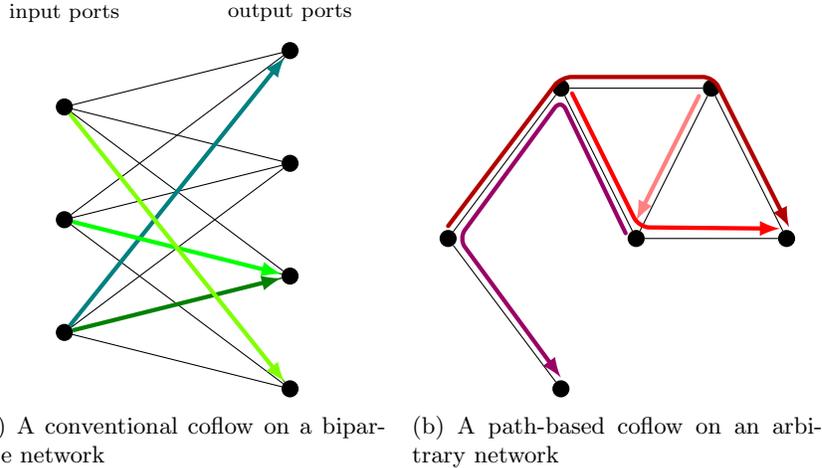


To close this gap in the existing literature, we introduce the concept of \emph{\gls{acr:pcsp}}, which considers coflow scheduling in the more general setting of Jahanjou et al., where coflows consist of multiple data flows that may run between any two machines of the underlying network on a fixed path of finite length (see Figure~\ref{pic:path-based_coflow}). Additionally, we impose that machines can handle only a single flow type at any time. We further generalize the problem to non-uniform node capacities to consider different router capacities.

In the following, we first give a formal definition of \gls{acr:pcsp}, before we review related literature and detail the contribution of our work.

\subsection{Definition of the Path-based Coflow Scheduling Problem}
\label{subsec:def_path_based_coflow_scheduling}

Let $G_I=(V_I,E_I)$ be a multigraph with $m$ nodes. Every node corresponds to a machine and every edge to a communication line between two machines. A \emph{coflow} $k \in [n]$ with weight $w_k$ is a collection of \emph{flows} $\fjk$, $j \in [n_k]$, each sending $\cjk \in \mathbb{N}$ units of data along a given path $\pjk$ in the graph. For the longest flow-carrying path in $G$, we denote its number of nodes as $\lambda = \max_{k, j} \left| \pjk \right|$. Along all paths, we assume data transfer to be instantaneous.

For a given discrete and finite but sufficiently large time horizon with $t = 1, 2, \dots, T$ time steps, a \emph{schedule} assigns the execution of every flow of each coflow to $\cjk$ time steps, such that each node handles at most one unit of data per time step. To this end, a coflow and its flows have a \emph{release time} $r_k$ such that flows of coflow $k$ can only be scheduled from time step $t = r_k + 1$ onward. Each coflow has a \emph{completion time} $C_k$, which is the earliest time at which all flows related to $k$ have been executed.

In this setting, the objective of \gls{acr:pcsp} is to find a schedule that minimizes the weighted sum of completion times
\begin{equation}\label{eq:obj_func}
\min \ \sum_{k = 1}^n w_k C_k.
\end{equation}

\subsection{Related Work}
\label{subsec:related_work}

One may view our approach as coflow scheduling with underlying matching constraints on general network topologies. Accordingly, \gls{acr:pcsp} is related to different variants of Coflow Scheduling and to the \gls{acr:cosp} problem in general. In the following, we concisely review related work in these fields.
 
Within the emerging field of coflow scheduling, primarily \gls{acr:bcsp} (see Figure~\ref{pic:bipartite_coflow}) has been studied \cite{AhmadiKhullerEtAl2017,ChowdhuryStoica2015,ChowdhuryZhongEtAl2014,QiuSteinEtAl2015}. Ahmadi et al.\ presented the current state of the art, providing a 5-approximation for \gls{acr:bcsp} with release times, and a 4-approximation if release times are zero~\cite{AhmadiKhullerEtAl2017}. Recently, Shafiee and Ghaderi \cite{ShafieeGhaderi2018} achieved the same ratios based on a different LP formulation. 

Table~\ref{tab:differences_coflow_types} summarizes coflow variants on arbitrary graphs, which have not been extensively studied so far and can be divided into \emph{path-based}, \emph{circuit-based}, and \emph{packet-based} coflow scheduling: Jahanjou et al.\ focused on circuit-based coflows where a flow is a connection request between two nodes, every edge has limited capacity, but different jobs may send flow over the same edge \cite{JahanjouKantorEtAl2017}. They provide a $17.6$-approximation for circuit-based coflows with fixed paths. Recently, Chowdhury et al.\ improved this ratio to a randomized 2-approximation~\cite{ChowdhuryKhullerEtAl2019}. Jahanjou et al.\ also considered \emph{packet-based coflows} to be a set of packet transmissions from a source to a destination on a given network. Every edge can only serve at most one packet per time. Contrary to the circuit setting, packets are not transfered along the entire path instantaneously. They gave an $O(1)$-approximation for packet-based coflows with given paths.

\gls{acr:pcsp} has not been addressed so far and differs from circuit- and packet-based coflows as it allows only single unit-sized packets to be sent over each node. In contrast, previous approaches allow fractional data transmissions on nodes and links.

In all of the previously mentioned results, the path a flow takes through the network is assumed to be fixed. Several publications additionally introduce different methods of \emph{routing} for the flows in the network to further improve completion times, including \cite{ChowdhuryKhullerEtAl2019,JahanjouKantorEtAl2017,ShiZhangEtAl2018,ZhaoChenEtAl2015}. In this paper, we always assume paths are given by the problem definition and no routing takes place.

So far, providing an algorithm for \gls{acr:bcsp} with an approximation ratio less than $5$ resp.\ $4$ has not been successful. Im et al.\ \cite{ImMoseley2019} recently presented a $2$-approximation for Matroid Coflow Scheduling, where the family of flow sets that can be scheduled in a given time slot form a matroid. Since the flows that can be scheduled in a bipartite network do not form a matroid, this result does not improve the afore-mentioned ratios. A $2$-approximation for \gls{acr:bcsp} was claimed in \cite{ImPurohit2017}, which was then subsequently retracted.

Coflow Scheduling generalizes \gls{acr:cosp} \cite{LeungLiEtAl2007,MastrolilliQueyranneEtAl2010,WangCheng2007}, where we are given a set of machines $i \in [m]$, jobs $k \in [n]$, and every job $k$ has $\cik$ operations on machine $i$. The weight, release time, and completion time of a job are defined as for coflows. The goal is to minimize the weighted sum of completion times $\sum_{k = 1}^n w_k C_k$. Sachdeva and Saket showed in \cite{SachdevaSaket2013} that \gls{acr:cosp} is hard to approximate within a factor of $2 - \eps$ if $\text{P}\neq \text{NP}$ and therefore the same result holds for any variant of Coflow Scheduling. \gls{acr:cosp} admits a 3-approximation and 2-approximation for general and zero release times, respectively \cite{LeungLiEtAl2007}. During our algorithm, we will compute a solution to the underlying \gls{acr:cosp} inherent to any coflow instance. 

\begin{table}[btp]
	\centering
	\caption{Overview of the differences between coflow settings on arbitrary graphs} \label{tab:differences_coflow_types}
	\resizebox{\columnwidth}{!}{ 
		\begin{tabular}{llll}
			\toprule
			& & \textbf{constraints} & \textbf{data transfer along path} \\
			\midrule
			\emph{path-based} &  & (unit) node capacities & instantaneous, restricted to unit-sized packets\\
			\emph{circuit-based} & \cite{JahanjouKantorEtAl2017,ChowdhuryKhullerEtAl2019} & edge capacities & instantaneous, multi-transfer\\
			\emph{packet-based} & \cite{JahanjouKantorEtAl2017} & unit edge capacities & stepwise, multi-transfer\\
			\bottomrule
		\end{tabular}
	}
\end{table}

\subsection{Contribution}
\label{subsec:contribution}

With this paper, we are the first to introduce \gls{acr:pcsp}, which generalizes the well known \gls{acr:bcsp}. We present an approximation algorithm based on a novel edge scheduling problem on a hypergraph. Specifically, we show that flows can be interpreted as hyperedges instead of paths in the network, because they occupy all machines on their path simultaneously. Based on the same insight, we show that \gls{acr:pcsp} is at least as hard as vertex coloring and hence admits no constant approximation ratio.  Theorem~\ref{thm:result_det_path_based} states our main result with $\lambda$ being the number of vertices in a longest flow-bearing path of an instance graph.

\begin{theorem}
	\label{thm:result_det_path_based}
	There exists a $(2\lambda + 1)$-approximation for \gls{acr:pcsp} with arbitrary release times and a $2\lambda$-approximation for \gls{acr:pcsp} with zero release times.
\end{theorem}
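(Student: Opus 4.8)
The plan is to reduce \gls{acr:pcsp} with unit node capacities to its inherent \gls{acr:cosp} relaxation, to sort the coflows according to a near-optimal \gls{acr:cosp} ordering, and then to greedily list-schedule the flows (viewed as hyperedges) in that order, losing only a factor of $\lambda$ in each coflow's completion time. For $\lambda=2$ this should recover exactly the known $4$- and $5$-approximations for \gls{acr:bcsp}, which is a useful sanity check for the constants.

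\emph{Reduction to concurrent open shop.} For a coflow $k$ and a node $i$, let $\Lik:=\sum_{j\,:\,i\in\pjk}\cjk$ be the total amount of data that coflow $k$ routes through $i$. Since every node processes at most one data unit per time step, any feasible \gls{acr:pcsp} schedule restricted to a single node $i$ is a one-machine schedule in which job $k$ has length $\Lik$ and finishes by $C_k$, so its completion-time vector satisfies the parallel inequalities $\sum_{k\in S}\Lik\,C_k\ge\tfrac12\big(\sum_{k\in S}\Lik\big)^2+\tfrac12\sum_{k\in S}(\Lik)^2$ for every node $i$ and every $S\subseteq[n]$, together with $C_k\ge r_k$. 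Hence the \gls{acr:cosp} LP built from these inequalities is a valid relaxation of \gls{acr:pcsp}, and its optimum is a lower bound on $\opt:=\sum_k w_kC_k$ of an optimal \gls{acr:pcsp} schedule. I would compute an optimal fractional solution $(\tilde C_k)$ --- which can be done in polynomial time, e.g.\ via the ellipsoid method, or via the combinatorial \gls{acr:cosp} algorithm of \cite{LeungLiEtAl2007} --- and relabel the coflows so that $\tilde C_1\le\dots\le\tilde C_n$. Instantiating the parallel inequality for node $i$ with $S=\{1,\dots,k\}$ and using $\tilde C_l\le\tilde C_k$ for $l\le k$ gives the load bound $\sum_{l\le k}\Lil\le 2\tilde C_k$ for every $i$ and $k$; with $\bar C_k:=\max_i\sum_{l\le k}\Lil$ this yields $\bar C_k\le 2\tilde C_k$, hence $\sum_k w_k\bar C_k\le 2\sum_k w_k\tilde C_k\le 2\opt$, and also $\sum_k w_kr_k\le\sum_k w_k\tilde C_k\le\opt$.

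\emph{Greedy list scheduling.} I would then build an actual \gls{acr:pcsp} schedule by sweeping over time steps; within each step I scan the coflows in the order $1,\dots,n$, and within a coflow its still-unfinished flows, scheduling one unit of a flow whenever the coflow is released and all (at most $\lambda$) nodes on its path are still free in that step. Fix a coflow $k$ and let $f$ be a flow of $k$ that finishes last, running along a path $P$ with $|P|\le\lambda$; then $C_k$ is the step in which $f$'s final unit is processed. For every step $t$ with $r_k<t\le C_k$, flow $f$ either advances at $t$ --- so all of $P$, in particular some node $v\in P$, carries the coflow-$k$ flow $f$ --- or $f$ is blocked, which by the scan order forces some node $v\in P$ to be occupied at $t$ by a flow of a coflow $l\le k$. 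Charging each such step to an occupied node of $P$, and noting that a fixed node $v$ is occupied by flows of coflows $l\le k$ during at most $\sum_{l\le k}L_v^{(l)}\le\bar C_k$ steps, one obtains $C_k-r_k\le\lambda\,\bar C_k$, that is, $C_k\le r_k+\lambda\bar C_k$ in general and $C_k\le\lambda\bar C_k$ when all $r_k=0$.

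\emph{Combining the bounds, and the main obstacle.} Summing over coflows with weights, the zero-release case gives $\sum_k w_kC_k\le\lambda\sum_k w_k\bar C_k\le 2\lambda\,\opt$, and the general case gives $\sum_k w_kC_k\le\sum_k w_kr_k+\lambda\sum_k w_k\bar C_k\le\opt+2\lambda\,\opt=(2\lambda+1)\opt$, which is the claim. The step I expect to require the most care is the structural claim above --- that a last-finishing flow of coflow $k$, whenever it fails to progress in a step after its release, is blocked only by coflows ordered no later than $k$: this uses crucially that each flow occupies its whole path for a single time step (so ``$v$ is occupied at $t$'' means precisely that some flow placed in step $t$ passes through $v$) and that the sweep respects the \gls{acr:cosp} order, and one must additionally verify that the overcount incurred by summing over the up to $\lambda$ nodes of $P$, together with the per-node bound $\sum_{l\le k}L_v^{(l)}\le\bar C_k$, costs only the single factor $\lambda$. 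A secondary point is to keep release times out of the parallel inequalities, so that they enter only through the additive term $\sum_k w_kr_k\le\opt$ (the ``$+1$'') rather than inflating the multiplicative factor.
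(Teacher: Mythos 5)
Your proof is correct and is, at its core, the same two-phase strategy the paper uses: solve the COS-type LP relaxation to get a job order with the load bound $\sum_{l\le k}\Lil\le 2\tilde C_k$ for all $i$, then schedule greedily respecting that order, losing a factor $\lambda$ per node. What you do differently is in the scheduling phase. The paper reformulates the flows as hyperedges, takes the line graph of the resulting hypergraph, orients it so that higher-deadline hyperedges point to lower-deadline ones, proves $|d^+(e)|\le\lambda(D_e-1)$, and then repeatedly extracts kernels (maximal ``stable'' sets) to fill time slots, arguing that $e$ loses one outgoing arc per iteration after $r_e$ until it is eventually scheduled. You instead run a plain priority list-scheduling sweep in the LP order and bound a last-finishing flow $f$ of coflow $k$ directly: every step in $(r_k,C_k]$ charges one node of $f$'s path that is occupied by a coflow $l\le k$, and each node $v$ can absorb at most $\sum_{l\le k}L_v^{(l)}\le\bar C_k$ such charges, giving $C_k\le r_k+\lambda\bar C_k\le r_k+2\lambda\tilde C_k$.

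The two arguments are equivalent in strength and essentially dual to each other: your greedy scan in deadline order produces, in each time slot, exactly a kernel of the paper's orientation, and your per-node charging bound $\sum_{l\le k}L_v^{(l)}$ is the same quantity the paper uses to bound $|d_v^+(e)|$ in its equation~\eqref{eq:proof_cardinality}. What your presentation buys is elementarity: you avoid introducing line graphs, orientations, Richardson's kernel-perfection theorem, and the appendix algorithm for extracting kernels, and the approximation factor drops out of a one-paragraph charging argument. What you lose is a bit of modularity: the paper's formulation isolates a clean combinatorial ``Edge Scheduling with deadlines'' problem, which it then reuses verbatim when generalizing to non-uniform node capacities (Section~\ref{subsec:vertexConstraints}), whereas a direct charging argument would need to be redone there. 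Both routes are valid; yours is the more streamlined proof of Theorem~\ref{thm:result_det_path_based} itself.

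One small point worth making explicit, which you flag as ``the step requiring the most care'' and handle correctly but tersely: the claim that a blocker of $f$ at step $t>r_k$ has index $\le k$ relies on processing the coflows in nondecreasing LP order \emph{within} each time step and never revoking a tentative assignment, so that when $f$ is examined only coflows $\le k$ (and earlier-scanned flows of $k$) can already occupy nodes of $P$; and the charging is injective because each step $t\in(r_k,C_k]$ is attributed to exactly one occupied node of $P$, so the up-to-$\lambda$ nodes contribute additively, not multiplicatively.
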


Since no constant ratio can be achieved, some parameter of \gls{acr:pcsp} has to be assumed constant to make the problem tractable. We consider a natural restriction on $\lambda$, the maximum length of any occurring path. For small values of $\lambda$, our instance reduces to well-studied concurrent scheduling problems. In these cases, our ratios match the current state-of-the-art results. We refer to Section \ref{subsec:hardness_reduction} for details on the hardness of \gls{acr:pcsp} and the reduction from vertex coloring.

Section \ref{sec:methodology} details the proof of Theorem~\ref{thm:result_det_path_based}. It is possible to extend the special case where all release times are zero to arbitrary release times that are smaller than $\lambda$. Furthermore, the approximation guarantee can be slightly improved to a factor strictly smaller than $(2\lambda +1)$ and $2\lambda$, respectively. See Appendix \ref{sec:improvements} for details on how to obtain these minor improvements. Additionally, we generalize the algorithm to the case of non-uniform node capacities in Section~\ref{subsec:vertexConstraints}. We also show that it matches or improves the state of the art for several problem variants. First, for $\lambda \le 9$ we improve the deterministic state of the art for circuit-based coflows with unit capacities, which is a 17.6-approximation developed by Jahanjou et al. \cite{JahanjouKantorEtAl2017}. Refer to Appendix \ref{sec:equ_edge_node_constr} for more details on this statement. Second, for $\lambda=2$ our algorithm matches the state of the art for \gls{acr:bcsp}, a 5-approximation with release times and a 4-approximation without release times~\cite{AhmadiKhullerEtAl2017,ShafieeGhaderi2018}. Moreover, our algorithm yields the same ratios without the bipartiteness condition. Refer to Section \ref{subsec:bipartiteCS} for a detailed comparison. Third, with $\lambda = 1$ \gls{acr:pcsp} reduces to \gls{acr:cosp}. In this case, our algorithm matches the state of the art, yielding a 3-approximation with release times and a 2-approximation without release times. Overall, our approach seems to capture the difficulty of open shop scheduling with matching constraints well, especially if the parameter $\lambda$ is small.

\section{Methodology}\label{sec:methodology}

This section details the methodological foundation for Theorem~\ref{thm:result_det_path_based} and discusses the general hardness of our Problem.

We prove the theorem in two steps. First, we introduce an LP relaxation of \gls{acr:pcsp} in Section~\ref{sec:finding_deadlines}. Specifically, we reduce an instance of \gls{acr:pcsp} to an instance of \gls{acr:cosp} by ignoring matching constraints and considering each node individually. We derive deadlines for the coflows from the LP solution. These tentative deadlines lie provably close to the optimal solution of the LP. 

An important insight of this paper is that since flows occupy \emph{all} machines on their path simultaneously they can be interpreted as hyperedges instead of paths in the network. 
Thus, we transform every flow-path of the underlying graph into a hyperedge in Section~\ref{sec:edge_scheduling}. Here, we determine a schedule such that every edge still finishes within a factor of the previously found deadlines but no hyperedges that contain the same node overlap. We introduce a new problem called \textit{Edge Scheduling}, based on a hypergraph $G = (V,E)$ with release time $r_e$, and deadline $D_e$ for every edge $e \in E$. At each discrete time step $t = 1, 2, \dots, T$, we can schedule a subset of edges if they form a matching. The goal is to find, if possible, a feasible solution that schedules all edges between their release time and deadline.

In summary, we prove Theorem~\ref{thm:result_det_path_based} in Section~\ref{sec:edge_scheduling} based on the following rationale: The solution of the Edge Scheduling problem lies within a guaranteed factor of the deadlines constructed in step one. Since these deadlines were defined by the LP solution, which in turn is bounded by the optimal solution of the Coflow instance, the combined algorithm ultimately yields a provably good approximation factor.

Finally, the rest of Section \ref{sec:methodology} is dedicated to showing the hardness of \gls{acr:pcsp} via a polynomial reduction from Vertex Coloring.

In the remainder, we refer to coflows as \emph{jobs} and to flows as \emph{operations} to avoid ambiguous wording.

\subsection{Finding Deadlines with Good Properties}
\label{sec:finding_deadlines}

Let $I$ be an instance of \gls{acr:pcsp} with its underlying graph $G_I = (V_I,E_I)$. We introduce variables $C_k$  to denote the completion time of each job $k$. Further, we define the \emph{load} of job $k$ on machine $i$ as the sum of all operations of $k$ that go through node $i$:
\begin{equation*}
\Lik= \sum_{j:\ i \in \pjk} \cjk.
\end{equation*}
For any subset $S \subset [n]$ and any machine $i \in [m]$, we define the variables $f_i(S)$:
\begin{equation*}
f_i(S) = \frac{1}{2} \cdot \left(\sum_{k\in S} \Big( \Lik \Big)^2 + \Big(\sum_{k\in S} \Lik \Big)^2 \right).
\end{equation*}

With this notation our LP relaxation results to:
\begin{align}
\min	&&\sum_{k = 1}^n w_k C_k 	&				&		& \label{eq:LP_obj} \\
\st		&&C_k						&\ge r_k + \Lik	&\quad	& \forall k \in [n], \forall i \in [m] \label{const:single_lower_bound} \\
		&&\sum_{k\in S}	\Lik C_k	&\ge f_i(S)		&		& \forall S \subset [n], \forall i \in [m]. \label{const:subset_lower_bound}
\end{align}
The first set of constraints \eqref{const:single_lower_bound} obtains a lower bound for the completion time of a single job $k$ based on its release time. 
The second set of constraints \eqref{const:subset_lower_bound} provides a lower bound on the completion time of any set of jobs $S$. 
Note that \eqref{const:subset_lower_bound} has been used frequently in \gls{acr:cosp} and \gls{acr:bcsp} \cite{Queyranne1993,LeungLiEtAl2007,GargKumarEtAl2007,AhmadiKhullerEtAl2017} and, although the number of constraints is exponential, can be polynomially separated \cite{Queyranne1993}. Accordingly, we can solve \eqref{eq:LP_obj}--\eqref{const:subset_lower_bound} using the ellipsoid method~\cite{GroetschelLovaszEtAl1993}.

We denote by $(C_k^*)_{k \in [n]}$ an optimal solution to the LP and consider (w.l.o.g.) the jobs to be ordered s.t.\ $C_1^* \le \hdots \le C_n^*$.

\begin{lemma}[{\cite[Lemma 11]{LeungLiEtAl2007}}]
\label{lem:bound_on_LP_sol}
	For all jobs $k \in [n]$ and all machines $i \in [m]$ the following holds:
	\begin{equation*}
		C_k^* \ge \frac{1}{2} \sum_{l=1}^k \Lil.
	\end{equation*}
\end{lemma}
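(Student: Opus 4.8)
The plan is to apply the exponentially-sized family of constraints \eqref{const:subset_lower_bound} to one carefully chosen subset and then exploit the assumed ordering $C_1^* \le \dots \le C_n^*$. Fix an arbitrary machine $i \in [m]$ and an index $k \in [n]$, and take $S = \{1,\dots,k\}$, i.e.\ the $k$ jobs with smallest LP completion times. Constraint \eqref{const:subset_lower_bound} for this $S$ and this $i$ reads
\begin{equation*}
\sum_{l=1}^{k} \Lil C_l^* \;\ge\; f_i(S) \;=\; \frac{1}{2}\left(\sum_{l=1}^{k}\bigl(\Lil\bigr)^2 + \Bigl(\sum_{l=1}^{k}\Lil\Bigr)^{2}\right).
\end{equation*}

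Next I would bound the left-hand side from above using monotonicity of the ordering: since $l \le k$ implies $C_l^* \le C_k^*$, we get $\sum_{l=1}^{k}\Lil C_l^* \le C_k^* \sum_{l=1}^{k}\Lil$. On the right-hand side I would simply discard the nonnegative term $\sum_{l=1}^k (\Lil)^2$, leaving $f_i(S) \ge \tfrac12\bigl(\sum_{l=1}^k \Lil\bigr)^2$. Combining the two estimates yields
\begin{equation*}
C_k^* \sum_{l=1}^{k}\Lil \;\ge\; \frac{1}{2}\Bigl(\sum_{l=1}^{k}\Lil\Bigr)^{2},
\end{equation*}
and dividing through by $\sum_{l=1}^{k}\Lil$ gives the claimed inequality $C_k^* \ge \tfrac12\sum_{l=1}^k \Lil$.

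The only case that does not immediately fit this argument is $\sum_{l=1}^{k}\Lil = 0$, where the division is illegitimate; but then the claimed right-hand side is $0$ and the inequality holds trivially because $C_k^* \ge 0$ (which itself follows from \eqref{const:single_lower_bound} with $r_k \ge 0$ and $\Lik \ge 0$). So there is no real obstacle here: the proof is essentially a one-line substitution, and the substantive content lies entirely in having set up the LP with the strong subset constraints \eqref{const:subset_lower_bound} and in fixing the job order by $C^*$. The statement is exactly the per-machine "load" analogue of the classical concurrent-open-shop bound of Leung, Li, and Pinedo, and I would expect the write-up to be no more than a few lines, perhaps with a sentence noting that $f_i(S)$ is precisely the quantity for which \eqref{const:subset_lower_bound} is tight when jobs on machine $i$ are processed contiguously in the order $1,\dots,k$.
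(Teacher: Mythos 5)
Your proof is correct and follows essentially the same route as the paper: apply constraint~\eqref{const:subset_lower_bound} to $S=\{1,\dots,k\}$, drop the $\sum_l (\Lil)^2$ term, use the ordering $C_l^*\le C_k^*$ to replace each $C_l^*$ by $C_k^*$ on the left, and divide by $\sum_{l=1}^k \Lil$. The only (welcome) extra touch is your explicit handling of the degenerate case $\sum_{l=1}^k \Lil = 0$, which the paper leaves implicit.
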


\begin{proof}
	Let $k \in [n], i \in [m], S = \{1,\dots,k\}$. Since $C_1^*,\dots,C_n^*$ is a feasible solution of the LP, it must fulfill \eqref{const:subset_lower_bound}, such that
	\begin{align*}
		\sum_{l=1}^k \Lil C_l^* &= \sum_{l\in S} \Lil C_l^* \\
								&\ge f_i(S) \\
								&\ge \frac{1}{2} \cdot \left( \sum_{l=1}^k \Lil \right)^2.
	\end{align*}
	Accordingly, we estimate the completion time of job $k$ as follows:
	\begin{align*}
		C_k^* 	&\ge \frac{\sum_{l=1}^k \Lil C_k^*}{\sum_{l=1}^k \Lil} \\
				&\ge \frac{\sum_{l=1}^k \Lil C_l^*}{\sum_{l=1}^k \Lil} \\
				&\ge \frac{1}{2} \sum_{l=1}^k \Lil.
	\end{align*}
\end{proof}

We now define a deadline $D_k$ for every job $k$. We utilize $D_k$ in Section~\ref{sec:edge_scheduling} to define a partial order on the operations of the instance.
With Lemma~\ref{lem:bound_on_LP_sol}, we estimate $D_k$ for all $k \in [n]$:
\begin{equation}
\label{eq:lower_bound_deadlines}
	D_k := 2 \cdot C_k^* \ge \sum_{l=1}^k \Lil.
\end{equation}

\subsection{The Edge Scheduling Algorithm}
\label{sec:edge_scheduling}

In this section, we design our edge scheduling algorithm. First, based on the deadlines $D_k$, we define a partial order on the operations of $I$. For every operation~$j$, it induces an upper bound on the number of preceding operations that share a node with $j$. With this order, we can then devise our edge scheduling algorithm.

\paragraph{Operation Order Based on Deadlines.}
We transform $G_I$ into a hypergraph $G = (V,E)$. While the node set remains the same ($V = V_I$), we derive the hyperedges from the operations of the instance, i.e., the edge set $E$ consists of all hyperedges constructed in the following way: Let $\fjk$ be an operation on a path $\pjk$. Then, we add \emph{for each of the $c_j^{(k)}$ units of data sent by $\fjk$} a corresponding hyperedge ${e := \{v \in V: v \in \pjk\}}$, such that it consists of all nodes of the operation's path. Note that we have $|e| \le \lambda$ for all $e \in E$ with the maximum path-length $\lambda$. By so doing, we receive $c_j^{(k)}$ identical edges for every operation. Furthermore, let $k_e \in [n]$ denote the job corresponding to the operation of edge $e$. We set the release time $r_e := r_{k_e}$ and the deadline $D_e := D_{k_e}$ of $e$. 

Note that adding a hyperedge for every unit of data sent over an operation may not be polynomial in the length of the input. We can avoid a pseudo-polynomial construction of the hypergraph if we only add one hyperedge $e$ for each operation $\fjk$ and assign a value of $c_j^{(k)}$ to $e$. As it makes no difference to the algorithm whether one edge is scheduled $c_j^{(k)}$ times or $c_j^{(k)}$ identical edges are each scheduled once, we choose the latter variant, which allows for a clearer analysis.

We now consider the line graph $L=L(G)$ of $G$. Note that $L$ is always a simple graph, although $G$ is a hypergraph with possibly multiple edges. Let $e$ and $f$ be hyperedges of $G$ with a common vertex $v$. We then say the edge $\{e,f\} \in E(L)$ \emph{originated} from $v$.

As a basis for our algorithm, we define an order on the operations, i.e., the hyperedges of $G$ or the vertices of $L$, using the notion of \emph{orientations} and \emph{kernels}.

\begin{definition}
	Let $G$ be a graph. An \emph{orientation} $O$ of $G$ is a directed graph on the same vertex and edge set, where every edge in $G$ is assigned a unique direction in $O$. We define $d^+(v)$ as the set of outgoing edges at a vertex $v \in V(O)$.
\end{definition}
\begin{definition}
\label{def:kernel}
	Let $G$ be a graph and let $O$ be an orientation of $G$. We call an independent set ${U \subset V(O)}$ a \emph{kernel} of $O$, if for all $v \notin U$ there is an arc directed from $v$ to some vertex in $U$.
\end{definition}

W.l.o.g.\ we order the vertices of $L$ (i.e., the hyperedges of $G$) by the converted deadlines obtained from the job deadlines of Section \ref{sec:finding_deadlines}. For vertices that have the same deadline, we use an arbitrary order. Let this order be such that $D_{e_1} \le \hdots \le D_{e_{|V(L)|}}$, which is consistent with the ordering $D_{k_{e_1}} \le \hdots \le D_{k_{e_{|V(L)|}}}$ obtained from the deadlines of Section \ref{sec:finding_deadlines}. Let $N(e)$ be the set of neighbours of an edge $e$ in $L$. We construct an orientation $O$ of $L$ with Algorithm~\ref{alg:orientation}.

\begin{algorithm}[t]
	$V(O)\leftarrow V(L)$\;
	\For{$j \leftarrow |V(L)|$ \KwTo $1$}{
		\ForEach{$e \in N(e_j)$}{
			\If{$\{e_j,e\} \in E(L)$ not oriented yet}{
				add arc $(e_j,e)$ to $O$\;
			}		
		}
	}
	\caption{Orientation of the line graph} \label{alg:orientation}
\end{algorithm}

\noindent The algorithm simply directs any edge of $L$ such that the endpoint with the higher deadline points to the one with the lower deadline. Specifically, $O$ shows the characteristics described in Lemma~\ref{lem:orient_properties}.

\begin{lemma}
\label{lem:orient_properties}
	An orientation $O$ constructed by Algorithm \ref{alg:orientation} has the following properties:
	\begin{enumerate}
		\item Any vertex $e \in V(L)$ satisfies the inequality $|d^+(e)| \le \lambda (D_e -1)$.
		\item $O$ is kernel-perfect, i.e., every induced subgraph of $O$ has a kernel.
	\end{enumerate}
\end{lemma}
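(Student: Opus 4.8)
The plan is to verify the two properties separately, both essentially by unwinding the construction in Algorithm~\ref{alg:orientation}.

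For the first property, fix a vertex $e \in V(L)$, i.e., a hyperedge $e$ of $G$, and count its outgoing arcs. By the rule in Algorithm~\ref{alg:orientation}, an arc $(e,f)$ is added only when $f$ is processed before $e$ in the loop (the loop runs from $|V(L)|$ down to $1$), which by our ordering means $D_f \le D_e$; moreover the edge $\{e,f\}$ must originate from some common vertex $v$ of $e$ and $f$ in $G$. So I would bound $|d^+(e)|$ by summing, over the at most $\lambda$ vertices $v \in e$, the number of hyperedges $f \ne e$ with $v \in f$ and $D_f \le D_e$. The key observation is that every such $f$ is an operation-edge whose job $k_f$ satisfies $D_{k_f} \le D_e = D_{k_e}$, hence $k_f \in \{1,\dots,k_e\}$ in the job order; and all the data units routed through $v$ by jobs $1,\dots,k_e$ contribute at most $\sum_{l=1}^{k_e} L_v^{(l)}$ edges through $v$. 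By the deadline estimate \eqref{eq:lower_bound_deadlines}, $\sum_{l=1}^{k_e} L_v^{(l)} \le D_{k_e} = D_e$. Counting $e$ itself among these, the number of $f \ne e$ through $v$ with $D_f \le D_e$ is at most $D_e - 1$. Summing over the $\le \lambda$ vertices of $e$ gives $|d^+(e)| \le \lambda(D_e - 1)$, as claimed. (One should be slightly careful that an $f$ sharing two vertices with $e$ is counted at most twice, which only strengthens the bound; and that ties in deadlines, broken arbitrarily, are handled since an arc out of $e$ still forces $D_f \le D_e$.)

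For the second property, I would argue that $O$ is acyclic: every arc points from a later vertex (in the fixed total order $e_1,\dots,e_{|V(L)|}$) to an earlier one, so $O$ has no directed cycle — it is a sub-orientation of the linear order. Any induced subgraph of an acyclic orientation is again acyclic, so it suffices to show every acyclic orientation has a kernel, which is a classical fact (Richardson's theorem, or a direct greedy argument): process vertices in reverse topological order, adding a vertex to $U$ whenever it has no out-neighbour already in $U$. The resulting $U$ is independent — if two adjacent vertices were both in $U$, the one processed later (the tail of the arc between them) would have seen the other as an out-neighbour in $U$ and been excluded — and it is absorbing: any $v \notin U$ was excluded precisely because it had an out-neighbour in $U$. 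Hence $O$ is kernel-perfect.

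The main obstacle is the bookkeeping in the first property: making the counting argument rigorous requires being precise about the correspondence between hyperedges of $G$ through a node $v$ and the data units $L_v^{(l)}$ of the jobs, about double-counting when $|e \cap f| \ge 2$, and about the direction of the inequality induced by the tie-breaking order in Algorithm~\ref{alg:orientation}. The second property is essentially standard once acyclicity is observed, so I expect it to be short.
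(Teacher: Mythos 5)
Your proof is correct and follows essentially the same route as the paper's: in part~1 you bound $|d^+(e)|$ by summing, over the at most $\lambda$ endpoints $v$ of $e$, the number of earlier-indexed hyperedges through $v$, identify this count with $\sum_{l=1}^{k_e} L_v^{(l)}-1$ (subtracting $e$ itself), and then use the deadline bound $\sum_{l=1}^{k_e} L_v^{(l)} \le D_{k_e}$ from equation~\eqref{eq:lower_bound_deadlines}; in part~2 you observe that the orientation is acyclic (arcs always point toward the lower-indexed edge) and appeal to the classical fact that acyclic digraphs are kernel-perfect, exactly as the paper does by citing Richardson's theorem. The only slight point of care, which you flag yourself, is that the edge ordering should be chosen consistently with the fixed job ordering (so that an arc from $e$ to $f$ genuinely implies $k_f \le k_e$, not merely $D_{k_f}\le D_{k_e}$), but this matches the paper's own level of detail.
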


\begin{proof}
	We prove Lemma \ref{lem:orient_properties} in two steps.
	\begin{enumerate}
		\item Consider an arbitrary vertex of $L$ representing edge $e$ with $j$ being the index of $e$ in the ordering of the edges $D_{e_1} \le \hdots \le D_{e_{|V(L)|}}$. Recall that by Algorithm~\ref{alg:orientation}, $e$ has only outgoing arcs in $L$ to vertices in the set $\{e_1,\dots,e_{j-1}\}$.
	
		In $G$, $e$ is a hyperedge with at most $\lambda$ endpoints. Let $v$ be an endpoint of $e$ and let $d_v^+(e) \subset d^+(e)$ be the set of outgoing arcs from $e$ that originated from $v$ during the construction of the line graph.
	
		We now focus on the cardinality of $d_v^+(e)$: the endpoint of any arc from this set must lie in $\{e_1,\dots,e_{j-1}\}$. Recall that by $k_e$ we denote the job corresponding to an edge $e$. For all edges $f \in \{e_1,\dots,e_{j-1}\}$ we have $D_{k_f} \le D_{k_e}$. Hence, the same holds for all edges $f$ that are the endpoint of an arc in $d_v^+(e)$. Therefore, we obtain
		\begin{align}
		\left|d_v^+(e)\right| 	&\le \left|\{f \in E\setminus\{e\}: f \text{ contains } v \text{ and } D_{k_f} \le D_{k_e} \}\right| \nonumber\\
								&= \left|\{f \in E: f \text{ contains } v \text{ and } D_{k_f} \le D_{k_e} \}\right| - 1 \nonumber\\
								&= \sum_{l=1}^{k_e} L_v^{(l)} - 1 \label{eq:proof_cardinality}\\
								&\le D_{k_e} - 1. \label{eq:proof_cardinality2}
		\end{align}
		To derive \eqref{eq:proof_cardinality}, we observe that the load on machine $v$ up to job $k_e$ is equal to the number of edges containing $v$ from jobs with a smaller or equal deadline. The final step \eqref{eq:proof_cardinality2} results from \eqref{eq:lower_bound_deadlines}.
	
		Since $e$ has at most $\lambda$ endpoints in $G$, we conclude
		\begin{equation*}
		\left|d^+(e)\right| \le \sum_{v \in e} \left|d_v^+(e)\right| \le \lambda \cdot (D_{k_e} - 1) = \lambda \cdot (D_e - 1).
		\end{equation*}
	
		\item We note that any digraph without directed cycles of odd length is kernel-perfect {\cite{Richardson1946}}. Additionally, we observe that $O$ does not contain any directed cycles to begin with.
	\end{enumerate}
\end{proof}

\paragraph{Edge Scheduling.}
With these preliminaries, we devise our Edge Scheduling algorithm as described in Algorithm~\ref{alg:edge_scheduling}. This algorithm finds a feasible edge schedule on $G$ such that no edge is scheduled later than $r_e + \lambda D_e$ (see Lemma~\ref{lem:edge_sched_solution}). 

\begin{lemma}
\label{lem:edge_sched_solution}
	Algorithm \ref{alg:edge_scheduling} finds a feasible solution for Edge Scheduling on a given hypergraph $G$, s.t.\ every hyperedge $e$ is scheduled not later than $r_e + \lambda D_e$.
\end{lemma}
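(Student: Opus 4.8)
The plan is to schedule the hyperedges greedily in rounds, processing time steps $t = 1, 2, \dots$ in order, and at each step selecting a maximal set of eligible edges that forms a matching. The key is to combine the two properties of the orientation $O$ from Lemma~\ref{lem:orient_properties}: the bounded out-degree $|d^+(e)| \le \lambda(D_e - 1)$ and kernel-perfectness. Concretely, at each time step $t$ I would consider the subgraph of $L$ induced by the set of still-unscheduled edges $e$ whose release time satisfies $r_e < t$ (i.e.\ $r_e + 1 \le t$), together with the orientation inherited from $O$; since $O$ is kernel-perfect, this induced subgraph has a kernel $U$. A kernel is an independent set in $L$ — hence a matching in $G$ — so we can schedule exactly the edges of $U$ at time $t$. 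This is the natural algorithm Algorithm~\ref{alg:edge_scheduling} presumably implements, and I would first argue it always produces a valid Edge Scheduling solution: every scheduled set is a matching, and no edge is scheduled before its release time.

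The heart of the argument is the deadline bound: every edge $e$ gets scheduled by time $r_e + \lambda D_e$. Fix an edge $e$ with deadline $D_e$ and release time $r_e$. Suppose for contradiction that $e$ is still unscheduled at some time step $t > r_e$. Because we always pick a \emph{kernel} of the induced subgraph of unscheduled-and-released edges, the kernel property of Definition~\ref{def:kernel} guarantees that $e \notin U$ only if some out-neighbour of $e$ (in $O$, restricted to the current subgraph) lies in $U$ — that is, at every such time step $t$, at least one of $e$'s out-neighbours is scheduled. The out-neighbours of $e$ are drawn from the fixed set $d^+(e)$, which has size at most $\lambda(D_e - 1)$ by Lemma~\ref{lem:orient_properties}. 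Each out-neighbour, once scheduled, is removed from consideration permanently. Therefore $e$ can be blocked for at most $|d^+(e)| \le \lambda(D_e - 1)$ time steps after its release. Counting the release offset $r_e$ and adding $1$ for the slot in which $e$ itself is finally scheduled, $e$ is scheduled no later than $r_e + \lambda(D_e - 1) + 1 \le r_e + \lambda D_e$ (using $\lambda \ge 1$), which is the claimed bound.

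The main obstacle I anticipate is making the blocking argument airtight with respect to \emph{which} out-neighbours actually block $e$ and ensuring the count does not exceed $|d^+(e)|$. One has to be careful that in a given time step more than one out-neighbour of $e$ could be scheduled simultaneously (which only helps), and that an out-neighbour of $e$ in $O$ might fail to be an out-neighbour in the \emph{induced} subgraph because it was already scheduled or not yet released — but a not-yet-released out-neighbour cannot be what the kernel condition points to, since the induced subgraph only contains released edges, so in every blocking step the witness out-neighbour is genuinely one of the $\le \lambda(D_e-1)$ edges of $d^+(e)$ and is consumed. Assembling these observations carefully gives that the number of steps in which $e$ is released but unscheduled is at most $|d^+(e)|$, completing the proof. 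A secondary point to verify is that kernel-perfectness is preserved under the successive vertex deletions the algorithm performs; this follows because every induced subgraph of $O$ is itself kernel-perfect (an induced subgraph of an induced subgraph is an induced subgraph), so the kernel used at each step exists.
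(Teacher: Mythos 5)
Your proof is correct and follows essentially the same route as the paper: schedule a kernel of the induced subgraph on released, unscheduled edges at each time step, then argue that whenever $e$ is released but not picked, some out-neighbour in $d^+(e)$ must lie in the kernel and be permanently consumed, so $e$ waits at most $|d^+(e)| \le \lambda(D_e - 1)$ steps after release and is scheduled by $r_e + \lambda(D_e - 1) + 1 \le r_e + \lambda D_e$. Your attention to the subtlety that the blocking out-neighbour is taken from the induced subgraph (hence already released) and to the preservation of kernel-perfectness under deletion matches the care the paper takes.
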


\begin{proof}
	We note that any induced subgraph of $O$ has a kernel (see Lemma~\ref{lem:orient_properties}). Hence, we can find a kernel $U$ in each iteration of the algorithm because the modified graph remains an induced subgraph of the original orientation. Refer to Appendix \ref{sec:constructing_kernel} on how to construct a kernel in a cycle-free directed graph. Accordingly, Algorithm~\ref{alg:edge_scheduling} is well defined.
	
\begin{algorithm}[!b]
	\Input{A hypergraph $G=(V,E)$, an orientation $O$ of $L(G)$, a release time $r_e$ and a deadline $D_e$ for every hyperedge $e \in E\ \st\ d^+(e) \le \lambda (D_e - 1)$.}
	\Output{A feasible edge schedule on $G$.}
	$T \leftarrow \max_{e \in E} r_e + \lambda \cdot \max_{e \in E} D_e$\;
	\For{$t \leftarrow 1$ \KwTo $T$}{
				$O' \leftarrow$ the induced subgraph of $O$ on all vertices $e$ with $t > r_e$\;
				$U \leftarrow$ a kernel of $O'$\;
				\ForEach{$e \in U$}{
					schedule $e$ in timeslot $t$\;				
				}
				remove $U$ from $O$\;
	}
\caption{Edge Scheduling}\label{alg:edge_scheduling}
\end{algorithm}
	 
	For an arbitrary hyperedge $e$ of $G$, assume that in any iteration of the algorithm we have $d^+(e) = \emptyset$ and $e$ is already released. Then, $e$ is scheduled at the current time slot because $e$ lies in the kernel $U$ as it has no outgoing edges and $e\in O'$. Hence, it suffices to prove that for any hyperedge $e$ of $G$ after at most $r_e + \lambda D_e - 1$ iterations $d^+(e) = \emptyset$ holds.
	 
	The orientation $O$ fulfills $|d^+(e)| \le \lambda (D_e - 1)$ in the beginning of the algorithm (see Lemma~\ref{lem:orient_properties}). We note that for any iteration in which $t \le r_e$, hyperedge $e$ is not considered to be scheduled at all, which is necessary to satisfy the release time constraint. We now consider all iterations $r_e +1, r_e +2,\dots , T$. In each of these iterations, $e\in O'$ holds because $t > r_e$ and two cases remain:
	\begin{enumerate}
	\item If $e \in U$ at any point before iteration $r_e + \lambda D_e$, the result is immediate.
	\item If, on the other hand, $e \notin U$, then $e$ must have an outgoing edge to some $e' \in U$ by the kernel property of $U$. As $e'$ gets removed from $O$ at the end of the iteration, $e$ loses at least one outgoing edge. Hence, after at most $\lambda (D_e - 1) \le \lambda D_e - 1$ such iterations, we have $|d^+(e)| = 0$. 
	\end{enumerate}
	This concludes the proof.
\end{proof}

Given this upper bound on the scheduled time for every edge, we prove Theorem \ref{thm:result_det_path_based}.

\begin{proof}[Proof of Theorem \ref{thm:result_det_path_based}]
	We consider a given instance $I$ of \gls{acr:pcsp}. Then, we can solve the LP relaxation \eqref{eq:LP_obj} to receive a set of solutions $C_k^*$ for all jobs $k$ (see Section~\ref{sec:finding_deadlines}). We define deadlines $D_k = 2 \cdot C_k^*$. Note that we have $r_e \le C_k^*$ because of~\eqref{const:single_lower_bound}.
	
	Now, we transform the graph $G_I$ into a hypergraph $G$ as described in Section~\ref{sec:edge_scheduling}. Then, we define an orientation according to Algorithm \ref{alg:orientation} and run Algorithm \ref{alg:edge_scheduling} on $G$.
	
	By Lemma \ref{lem:edge_sched_solution}, this algorithm schedules every edge within $r_e + \lambda D_e$ in polynomial time. Given the specific structure of the hypergraph $G$ and the definition of deadlines for the hyperedges, the resulting schedule induces a feasible solution for the Coflow instance $I$ by assigning every operation to the slot of the corresponding hyperedge.
	
	Let $C_k$ be the final completion time of job $k$ in this solution; let $e$ be the last edge in the schedule associated to $k$; and let $C_e$ be the time slot in which $e$ is scheduled. Then for all $k \in [n]$:
	\begin{equation*}
		C_k = C_e \le r_e + \lambda D_e = r_k + \lambda D_k.
	\end{equation*}
	Summing over all jobs $k$, we obtain
	\begin{align*}
		\sum_{k=1}^n w_k C_k 	&\le \sum_{k=1}^n w_k (r_k + \lambda D_k) \\
								&= \sum_{k=1}^n w_k (r_k + \lambda \cdot 2 C_k^*) \\
								&\le (2 \lambda + 1) \cdot \sum_{k=1}^n w_k C_k^* \\
								&\le (2 \lambda + 1) \cdot \opt(I),
	\end{align*}
	and if $r_k \equiv 0$, we have
	\begin{align*}
		\sum_{k=1}^n w_k C_k 	&\le \sum_{k=1}^n w_k (\lambda D_k) \\
								&= 2\lambda \cdot \sum_{k=1}^n w_k C_k^* \\
								&\le 2\lambda \cdot \opt(I).
	\end{align*}
	We conclude that our Algorithm solves \gls{acr:pcsp} within a factor of $2\lambda + 1$ of the optimal solution for general release times. In the case of zero release times the solution lies within a factor of $2\lambda$ of the optimum.
\end{proof}

\subsection{Hardness of PCS}
\label{subsec:hardness_reduction}

In this section, we prove that \gls{acr:pcsp} admits no constant approximation ratio unless P = NP. More specifically, we prove the following, stronger statement:
\begin{theorem}
	\label{thm:pcs_hardness}
	Let $\mathcal{P} = \bigcup \pjk $ be the set of all paths in an instance of \gls{acr:pcsp}. Unless $\mathrm{P} = \mathrm{NP}$, there exists no algorithm for \gls{acr:pcsp} with an approximation ratio smaller than ${|\mathcal{P}|}^{1-\eps}$ for all $\eps > 0$, even if there is only a single coflow with unit weight.
\end{theorem}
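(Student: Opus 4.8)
The plan is to reduce Vertex Coloring to \gls{acr:pcsp} with a single unit-weight coflow, so that the $N^{1-\eps}$-inapproximability of the chromatic number transfers. First I would observe that with a single coflow $k$ of weight $1$ the objective $\sum_k w_k C_k$ is just $C_k$, the makespan of a schedule of all of $k$'s operations. Since an operation $\fjk$ occupies \emph{every} node of its path $\pjk$ in any time step in which it runs, and every node processes at most one unit per step, a set of operations can run simultaneously if and only if their paths are pairwise vertex-disjoint. If moreover every operation carries $\cjk = 1$ unit, a feasible schedule of makespan $T$ is exactly an assignment of each path to one of $T$ colors in which intersecting paths get distinct colors. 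Hence, letting $H$ be the \emph{conflict graph} of the instance (one vertex per path, an edge between two paths sharing a node), one checks both inequalities and concludes that the optimal objective value equals $\chi(H)$, while $|\mathcal{P}|$ equals the number of vertices of $H$.

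It therefore suffices to realize every graph $H$ as the conflict graph of a single-coflow \gls{acr:pcsp} instance of polynomial size whose number of paths equals $|V(H)|$. Given $H=(V_H,E_H)$, I would build the multigraph $G_I$ with a node $u_e$ for each edge $e\in E_H$ and a private node $w_i$ for each vertex $i\in V_H$. For $i\in V_H$ with incident edges $e_1,\dots,e_{d_i}$ in arbitrary order, let operation $f_i$ send one unit along the path $P_i=(w_i,u_{e_1},u_{e_2},\dots,u_{e_{d_i}})$, adding a fresh parallel copy of each consecutive edge to $G_I$ (legal, since $G_I$ is a multigraph). For $i\neq j$, paths $P_i$ and $P_j$ share a node iff some $u_e$ lies on both, i.e.\ iff $e$ contains both $i$ and $j$; as an edge has exactly two endpoints this occurs precisely when $e=\{i,j\}\in E_H$. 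The private nodes $w_i$ keep all paths nonempty (so isolated vertices of $H$ are handled) and never create intersections. Thus the conflict graph is exactly $H$, with $|\mathcal{P}|=|V_H|$ paths, unit node capacities, zero release times, and total size $O(|V_H|+|E_H|)$; by the previous paragraph its optimum is $\chi(H)$.

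Finally, I would invoke the classical fact that, unless $\mathrm{P}=\mathrm{NP}$, no polynomial-time algorithm approximates $\chi(H)$ of an $N$-vertex graph within $N^{1-\eps}$ for any $\eps>0$ (Zuckerman, strengthening Feige--Kilian and H{\aa}stad). A polynomial-time $|\mathcal{P}|^{1-\eps}$-approximation for \gls{acr:pcsp}, run on the instance above with $N=|\mathcal{P}|=|V_H|$, would output a schedule of makespan at most $N^{1-\eps}\chi(H)$, hence a proper coloring of $H$ with at most $N^{1-\eps}\chi(H)$ colors, i.e.\ an $N^{1-\eps}$-approximation for the chromatic number — a contradiction. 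This yields the theorem.

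I expect the delicate part to be the middle step: verifying that the explicit host graph and path family realize \emph{exactly} $H$ as the vertex-intersection graph of the paths — in particular ruling out spurious shared nodes (or shared host edges) introduced when stringing the $u_e$ into genuine paths — and ensuring the path count is exactly $|V_H|$, so the bound can be stated in terms of $|\mathcal{P}|$ rather than a polynomially larger quantity. The reduction direction and the final contradiction are then routine.
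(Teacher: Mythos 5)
Your proposal is correct and takes essentially the same approach as the paper: reduce from Vertex Coloring by building, for a given graph $H$, a single unit-weight coflow of unit-sized flows whose paths pairwise intersect exactly according to $E(H)$ (one host node $u_e$ per edge $e\in E(H)$, so the conflict/line graph is $H$ and the optimal makespan equals $\chi(H)$), then invoke Zuckerman's $N^{1-\eps}$-inapproximability of chromatic number. You spell out a few details the paper leaves implicit (the private nodes $w_i$ for isolated vertices and the fresh parallel edges making each hyperedge a genuine path in the multigraph $G_I$), but the reduction and the final contradiction are the same.
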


\begin{proof}
	We prove the hardness of \gls{acr:pcsp} via a reduction from Vertex Coloring.
	
	Let $G = (V,E)$ be a given graph in an instance of Vertex Coloring. We construct a hypergraph $H = (V_H,E_H)$ whose edges correspond to the vertices of $G$. For every edge $e \in E$, we add a node $v_e$ to $H$, such that only the two hyperedges corresponding to the endpoints of $e$ contain $v_e$. Put differently, $G$ is the line graph of $H$.
	
	Clearly, $H$ can be constructed in polynomial time. Additionally, $|V| = |E_H|$ and the edges of $H$ can be colored with $k$ colors if and only if the vertices of $G$ can be colored with $k$ colors.
	
	To turn $H$ into an instance of \gls{acr:pcsp}, we introduce a single coflow $C$ with weight 1 and redefine the hyperedges $j \in E_H$ as paths $P_j^{(C)}$ of unit-sized flows $f_j^{(C)}$ on the instance graph $G_I$ constructed from $H$. Then the completion time of $C$ is equal to the number of colors used in a coloring of the edges of $H$.
	
	If $\mathrm{P} \neq \mathrm{NP}$, Vertex Coloring cannot be approximated within ${|V|}^{1-\eps}$ \cite{Zuckerman2006}. Therefore, under the same assumption, \gls{acr:pcsp} cannot be approximated within ${|\mathcal{P}|}^{1-\eps}$, even when restricted to instances with a single, unit-weight coflow.
\end{proof}

\section{Extensions of the Algorithm}
\label{sec:extensions}

This section generalizes our result to additional application cases. First, we show in Section~\ref{subsec:vertexConstraints} how the algorithm can be extended for general vertex constraints. Then, we apply our algorithm to \gls{acr:bcsp} in Section~\ref{subsec:bipartiteCS}. 

\subsection{General Vertex Constraints}
\label{subsec:vertexConstraints}

In this section, we show how our algorithm can be generalized to $(i)$~homogeneous vertex capacities greater than one and $(ii)$~heterogeneous vertex capacities.

In the homogeneous case it is simple to transform the problem back to the unit capacity case. In the heterogeneous case, the approximation ratio depends on the maximum ratio between the average and lowest capacity of the vertices of a hyperedge as we will show in the remainder of this section.

Let $G = (V,E)$ be a hypergraph as constructed in Section~\ref{sec:edge_scheduling} and let $u(v) \in {\mathds{Z}}_{>0}$ be given for all $v \in V$. For every hyperedge $e \in E$ we introduce the notions of \emph{average capacity} ($\avg(e)$) and \emph{capacity disparity} ($\Delta(e)$): 
\begin{equation*}
\avg(e) := \frac{\sum_{v \in e} u(v)}{|e|}, \qquad \Delta(e) := \left\lceil \frac{\avg(e)}{\min_{v \in e} u(v)} \right\rceil.
\end{equation*}

To this end, we show Theorem~\ref{thm:result_det_path_based_extended}. We note that for $\lambda = 1$, where the ``hyperedges" only consist of single vertices, $\Delta(e) = 1$ holds for all $e$. Hence, we retain the ratios of $3$ and $2$ in this generalization of \gls{acr:cosp}. As soon as edges consist of at least two vertices, we must include the capacity disparity in the approximation ratio.

\begin{theorem}
\label{thm:result_det_path_based_extended}
	Let $\Delta = \max_{e \in E}\Delta(e)$. There exists a $(2\lambda \Delta + 1)$-approximation for Path-based Coflow Scheduling with arbitrary release times and a $(2\lambda \Delta)$-approximation for Path-based Coflow Scheduling with zero release times.
\end{theorem}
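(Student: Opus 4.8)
The plan is to rerun the three-stage argument behind Theorem~\ref{thm:result_det_path_based} --- LP relaxation, deadlines, edge scheduling --- with a capacity-aware twist inserted at each stage, so that the only price paid for heterogeneity is the factor $\Delta$. Observe first that the two claimed ratios already cover the homogeneous case: if $u(v)\equiv U$ then $\avg(e)=U=\min_{v\in e}u(v)$, hence $\Delta(e)=1$ for every $e$ and the bounds reduce to $2\lambda+1$ and $2\lambda$; in that regime the argument below is literally the proof of Theorem~\ref{thm:result_det_path_based} run with ``$U$ kernels per time slot''. So it suffices to treat arbitrary $u\colon V\to\mathbb{Z}_{>0}$.

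First I would replace, throughout Section~\ref{sec:finding_deadlines}, every load $\Lik$ that refers to machine $i$ by the speed-scaled load $\Lik/u(i)$: constraint~\eqref{const:single_lower_bound} becomes $C_k\ge r_k+\Lik/u(i)$, and in $f_i(S)$ each $\Lik$ is divided by $u(i)$. These remain valid for every true schedule, since machine $i$ handles at most $u(i)$ units per slot; and \eqref{const:subset_lower_bound} is still the Queyranne inequality for a single machine with job lengths $\Lik/u(i)$, so it separates in polynomial time \cite{Queyranne1993} and the LP stays solvable by the ellipsoid method. Repeating the proof of Lemma~\ref{lem:bound_on_LP_sol} verbatim with these scaled lengths gives, for an optimal solution ordered $C_1^*\le\dots\le C_n^*$,
\[
 C_k^*\ \ge\ \tfrac12\sum_{l=1}^k \Lil/u(i)\qquad\text{for all }k\in[n],\ i\in[m],
\]
so that the deadlines $D_k:=2C_k^*$ from \eqref{eq:lower_bound_deadlines} satisfy $\sum_{l=1}^k \Lil\le u(i)D_k$ for every machine $i$, while $r_k\le C_k^*$ still holds by \eqref{const:single_lower_bound}.

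Second, I would keep the hypergraph $G$, its line graph, and the deadline-ordered orientation $O$ from Algorithm~\ref{alg:orientation} unchanged; only the out-degree bound of Lemma~\ref{lem:orient_properties} needs recomputing. As before, for an endpoint $v$ of a hyperedge $e$ the arcs of $d_v^+(e)$ all point to earlier hyperedges through $v$, so $|d_v^+(e)|\le\sum_{l=1}^{k_e}L_v^{(l)}-1\le u(v)D_e-1$ by the displayed inequality above. Summing over the at most $\lambda$ endpoints of $e$,
\[
 |d^+(e)|\ \le\ \sum_{v\in e}\bigl(u(v)D_e-1\bigr)\ =\ D_e\sum_{v\in e}u(v)-|e|\ =\ |e|\bigl(\avg(e)D_e-1\bigr)\ \le\ \lambda\bigl(\avg(e)D_e-1\bigr),
\]
where $\avg(e)D_e\ge 1$ because $D_e\ge L_v^{(k_e)}/u(v)\ge 1/u(v)$ for every $v\in e$. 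Kernel-perfectness of $O$ is unaffected, as $O$ is still acyclic \cite{Richardson1946}.

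Finally, the edge-scheduling routine (Algorithm~\ref{alg:edge_scheduling}) must now put more than one hyperedge per vertex into a slot. I would replace the single kernel chosen at time $t$ by an inner loop of up to $\max_v u(v)$ rounds: in round $r$ let $O'_r$ be the subgraph of $O$ induced by the hyperedges that are released and whose every endpoint has so far been used in strictly fewer than $u(v)$ rounds of slot $t$, pick a kernel of $O'_r$ (it exists, being an induced subgraph of the kernel-perfect $O$), schedule it, and delete it from $O$. Each round is a matching of $G$, so after the slot every vertex $v$ has been used at most $u(v)$ times and the slot is feasible. A hyperedge $e$ stays eligible for at least the first $\min_{v\in e}u(v)$ rounds of every slot past $r_e$, and in each eligible round it is either scheduled or --- by the kernel property, since it then still has an arc into the just-removed kernel --- loses at least one outgoing arc; and once its out-degree hits $0$ while eligible it must lie in the kernel and is scheduled immediately. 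Starting from $|d^+(e)|\le\lambda(\avg(e)D_e-1)$ this yields $C_k=C_e\le r_k+\lambda\Delta(e)D_k\le r_k+2\lambda\Delta C_k^*$, and summing $w_kC_k$ exactly as in the proof of Theorem~\ref{thm:result_det_path_based} gives $\sum_k w_kC_k\le(2\lambda\Delta+1)\opt(I)$ in general and $2\lambda\Delta\cdot\opt(I)$ when $r_k\equiv 0$. The delicate point --- and the one genuinely new piece --- is exactly this last estimate: making the per-slot progress collapse to \emph{precisely} the factor $\Delta(e)=\lceil\avg(e)/\min_{v\in e}u(v)\rceil$ rather than a coarser quantity such as $\max_iu(i)/\min_iu(i)$ is why $\Delta(e)$ is defined with the \emph{average} capacity over $e$ and a ceiling, and why the out-degree is split per endpoint before summing; the ceilings and the ``$-1$''s then have to be absorbed with the same care as in Theorem~\ref{thm:result_det_path_based} (where integrality of $|d^+(e)|$ together with $\lambda\ge 1$ buys back the final scheduling slot), possibly after a routine rescaling of all data amounts and release times, while the bookkeeping of the individual eligibility windows inside a slot must guarantee that no vertex is ever overloaded even though every hyperedge still makes its full $\min_{v\in e}u(v)$ units of progress.
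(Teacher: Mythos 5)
Your proposal is correct and mirrors the paper's own argument almost step for step: the same $u(i)$-scaled LP and the resulting bound $C_k^*\ge\frac{1}{2u(i)}\sum_{l=1}^k\Lil$, the same deadlines $D_k=2C_k^*$, the same refined out-degree bound $|d^+(e)|\le\lambda(\avg(e)D_e-1)$ obtained by summing $|d_v^+(e)|\le u(v)D_e-1$ over the endpoints of $e$, and the same capacity-aware inner loop that extracts several kernels per time slot so that an edge $e$ loses at least $\min_{v\in e}u(v)$ outgoing arcs per slot, yielding a completion bound of $r_e+\lambda D_e\Delta(e)$. The only cosmetic difference is that you cap the inner loop at $\max_v u(v)$ rounds whereas the paper's Algorithm~\ref{alg:vertex_constr_edge_scheduling} loops until every vertex capacity is exhausted; both variants guarantee the $\min_{v\in e}u(v)$ rounds of progress that the analysis actually uses, so the approximation guarantee is unaffected.
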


We note that if all vertex capacities are homogeneous, that is $v(u) \equiv \bar{u}$, the capacity disparity of all edges is equal to $1$. Thus, in this case we retain the ratios $(2\lambda + 1)$ and $2\lambda$ from the unit capacity case. Alternatively, the homogeneous problem can be transformed back to the unit capacity case by linearly scaling the time horizon by $\bar{u}$, i.e.\ $\bar{u}$ timesteps in the new schedule correspond to $1$ timestep in the original problem. This incurs no additional factors in the approximation ratio of the algorithm.

Now consider general capacities $u(i)$ for every machine $i \in [m]$. We modify constraints~\eqref{const:single_lower_bound} of the LP to $C_k \ge r_k + \frac{\Lik}{u(i)}$ for all $k$ and $i$. Changing constraints~\eqref{const:subset_lower_bound} analogously, we get the following LP:
\begin{align*}
	\min	&&\sum_{k = 1}^n w_k C_k 	&								&		& \\
	\st		&&C_k						&\ge r_k + \frac{\Lik}{u(i)}	&\quad	& \forall k \in [n], \forall i \in [m]\\
			&&\sum_{k\in S}	\Lik C_k	&\ge \frac{f_i(S)}{u(i)}		&		& \forall S \subset [n], \forall i \in [m].
\end{align*}
Let $(C_k^*)_{k \in [n]}$ be an optimal solution of this LP, ordered such that $C_1^* \le \hdots \le C_n^*$. Then, Lemma~\ref{lem:bound_on_ext_LP_sol} revisits Lemma \ref{lem:bound_on_LP_sol}, requiring only minor changes in its proof.

\begin{lemma}
\label{lem:bound_on_ext_LP_sol}
	For all jobs $k \in [n]$ and all machines $i \in [m]$: $C_k^* \ge \frac{1}{2 u(i)} \sum_{l=1}^k \Lil$.
\end{lemma}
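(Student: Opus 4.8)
The plan is to mirror the proof of Lemma~\ref{lem:bound_on_LP_sol} verbatim, carrying the factor $1/u(i)$ through each inequality. The only structural change is that the modified subset constraint now reads $\sum_{k \in S} \Lik C_k \ge f_i(S)/u(i)$ instead of $\sum_{k\in S}\Lik C_k \ge f_i(S)$, so every lower bound that previously invoked $f_i(S)$ picks up a $1/u(i)$.

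First I would fix $k \in [n]$, $i \in [m]$, and set $S = \{1,\dots,k\}$. Since $(C_l^*)_l$ is feasible for the modified LP, constraint~\eqref{const:subset_lower_bound} (in its rescaled form) gives
\begin{align*}
	\sum_{l=1}^k \Lil C_l^* = \sum_{l \in S} \Lil C_l^* \ge \frac{f_i(S)}{u(i)} \ge \frac{1}{2 u(i)}\left(\sum_{l=1}^k \Lil\right)^{2},
\end{align*}
where the last step just drops the nonnegative term $\sum_{l\in S}(\Lil)^2$ from the definition of $f_i(S)$. Then, exactly as before, I would use the ordering $C_1^* \le \dots \le C_n^*$ to bound
\begin{align*}
	C_k^* \ge \frac{\sum_{l=1}^k \Lil C_k^*}{\sum_{l=1}^k \Lil} \ge \frac{\sum_{l=1}^k \Lil C_l^*}{\sum_{l=1}^k \Lil} \ge \frac{1}{2 u(i)} \sum_{l=1}^k \Lil,
\end{align*}
which is the claim. (If $\sum_{l=1}^k \Lil = 0$ the statement is trivial, so one may assume the denominator is positive.)

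There is essentially no obstacle here: the argument is a routine re-run of the original proof, and the monotonicity of the ordered $C_l^*$ together with the single rescaled subset constraint does all the work. The only point worth a moment's care is making sure the rescaling is applied consistently — that the ``$u(i)$'' appearing in the subset constraint is the same machine $i$ as in the statement — but since both constraint and conclusion are quantified over the same $i$, this is immediate. The payoff of this lemma is that one then sets $D_k := 2u_{\max}\cdot C_k^*$ (or an analogous machine-aware deadline) and feeds it into the edge-scheduling machinery of Section~\ref{sec:edge_scheduling}, where the capacity disparity $\Delta$ enters through how many unit-capacity ``slots'' a hyperedge must traverse.
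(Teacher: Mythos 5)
Your proof is correct and follows exactly the route the paper intends: the paper itself states that this lemma ``requires only minor changes'' to the proof of Lemma~\ref{lem:bound_on_LP_sol}, namely carrying the $1/u(i)$ factor through the rescaled subset constraint, which is precisely what you do. One small factual slip in your closing remark: the paper keeps $D_k := 2 C_k^*$ unchanged (not $2u_{\max}C_k^*$); the capacity disparity $\Delta$ enters later through Lemma~\ref{lem:orient_properties_extended} and the modified Algorithm~\ref{alg:vertex_constr_edge_scheduling}, not through the deadline definition.
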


We again define $D_k = 2 \cdot C_k^*$ and consider the hypergraph $G = (V,E)$ constructed from the input graph $G_I$ where all operations correspond to hyperedges. We define release times and edge deadlines analogously to Section~\ref{sec:finding_deadlines}, but based on the updated LP.
Then, we use Algorithm \ref{alg:orientation} to construct an orientation $O$ of the line graph $L = L(G)$ and reformulate Lemma \ref{lem:orient_properties}. 

\begin{lemma}
\label{lem:orient_properties_extended}
	The orientation $O$ as constructed by Algorithm \ref{alg:orientation} in the case of general vertex capacities has the following properties:
	\begin{enumerate}
		\item Any vertex $e \in V(L)$ of the line graph satisfies $|d^+(e)| \le \lambda (D_e \cdot \avg(e) -1)$.
		\item It is kernel-perfect, i.e., every induced subgraph of $O$ has a kernel.
	\end{enumerate}
\end{lemma}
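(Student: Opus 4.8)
The plan is to mimic the proof of Lemma~\ref{lem:orient_properties} almost verbatim, tracking where the vertex capacities $u(v)$ enter. For part~2, nothing changes at all: Algorithm~\ref{alg:orientation} still orients each edge of $L$ from the higher-deadline endpoint to the lower-deadline endpoint, so $O$ is acyclic regardless of capacities, and by Richardson's theorem~\cite{Richardson1946} every induced subgraph has a kernel. So the substance is entirely in part~1.

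For part~1, I would fix a vertex $e$ of $L$ (a hyperedge of $G$) with index $j$ in the deadline order, and for each endpoint $v \in e$ let $d_v^+(e)$ be the outgoing arcs of $e$ that originated from $v$. As in the original proof, every such arc points to an edge $f$ with $D_{k_f} \le D_{k_e}$ that also contains $v$, so $|d_v^+(e)| \le |\{f \in E : v \in f,\ D_{k_f} \le D_{k_e}\}| - 1 = \sum_{l=1}^{k_e} L_v^{(l)} - 1$. The key difference is the final estimate: instead of $\sum_{l=1}^{k_e} L_v^{(l)} \le D_{k_e}$ from the old LP, Lemma~\ref{lem:bound_on_ext_LP_sol} now gives $\sum_{l=1}^{k_e} L_v^{(l)} \le 2\,u(v)\,C_{k_e}^* = u(v)\,D_{k_e} = u(v)\,D_e$. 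Hence $|d_v^+(e)| \le u(v)\,D_e - 1$.

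Summing over the at most $\lambda$ endpoints $v \in e$ then yields
\begin{align*}
|d^+(e)| \le \sum_{v \in e} |d_v^+(e)| &\le \sum_{v \in e} \bigl( u(v)\,D_e - 1 \bigr) \\
&= D_e \sum_{v \in e} u(v) - |e| \\
&= |e| \bigl( D_e \cdot \avg(e) - 1 \bigr) \\
&\le \lambda \bigl( D_e \cdot \avg(e) - 1 \bigr),
\end{align*}
using the definition $\avg(e) = \frac{1}{|e|}\sum_{v \in e} u(v)$ and the bound $|e| \le \lambda$. This is exactly the claimed inequality. The one point needing a little care is the last step: pulling $|e|$ out front and then replacing it by $\lambda$ is valid because $D_e \cdot \avg(e) - 1 \ge 0$ (each hyperedge must be scheduled, so $D_e \ge 1$ and $\avg(e) \ge 1$), so the inequality $|e|(D_e\avg(e)-1) \le \lambda(D_e\avg(e)-1)$ indeed holds in the right direction. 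I expect no real obstacle here — the main thing is simply to invoke Lemma~\ref{lem:bound_on_ext_LP_sol} in place of Lemma~\ref{lem:bound_on_LP_sol} at the right moment and keep the bookkeeping with $\avg(e)$ straight.
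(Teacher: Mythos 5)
Your proof is correct and matches the paper's argument essentially line by line: carry over the bound $|d_v^+(e)| \le \sum_{l=1}^{k_e} L_v^{(l)} - 1$ from Lemma~\ref{lem:orient_properties}, substitute Lemma~\ref{lem:bound_on_ext_LP_sol} to get $|d_v^+(e)| \le u(v)\,D_e - 1$, sum over endpoints, and factor via $\avg(e)$; part~2 is unchanged. The only addition is your explicit remark that replacing $|e|$ by $\lambda$ is valid since $D_e \cdot \avg(e) - 1 \ge 0$, a point the paper treats as obvious but which you are right to check.
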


\begin{proof}
	We prove Lemma \ref{lem:orient_properties_extended} in two steps.
	\begin{enumerate}
		\item Let $e$ be any vertex of $L$ and $v$ be an endpoint of $e$. We may repeat the line of argument of the proof of Lemma \ref{lem:orient_properties} until the step
		\begin{equation*}
			\left|d_v^+(e)\right| \le \sum_{l=1}^{k_e} L_v^{(l)} - 1.
		\end{equation*}
		By Lemma \ref{lem:bound_on_ext_LP_sol} and the definition of $D_{k_e}$, we have $|d_v^+(e)| \le D_{k_e} \cdot u(v) - 1$. Note here that endpoints of $e$ correspond to machines of the job $k_e$. We sum over all such endpoints of $e$ to receive
		\begin{align*}
			\left|d^+(e)\right| &\le \sum_{v \in e} \left|d_v^+(e)\right| \\
								&\le \sum_{v \in e} (D_{k_e} \cdot u(v) - 1) \\
								&=  D_e \cdot \sum_{v \in e} u(v) - |e| \\
								&= |e| \cdot (D_e \cdot \avg(e) - 1).
		\end{align*}
		Observing that the number of endpoints $|e|$ is bounded by $\lambda$ gives the final inequality.
		
		\item See Proof of Lemma \ref{lem:orient_properties}.
	\end{enumerate}
\end{proof}

Now, we change the Edge Scheduling algorithm to nontrivial vertex constraints as follows.
\begin{algorithm}[h]
	\Input{A hypergraph $G=(V,E)$, an orientation $O$ of $L(G)$, a release time $r_e$ and a deadline $D_e$ for every $e \in E\ \st\ d^+(e) \le \lambda (D_e \cdot \avg(e) - 1)$, a set of vertex constraints $u(v)$ for all $v \in V$.}
	\Output{A feasible edge schedule on $G$.}
	$T \leftarrow \max_{e \in E} (r_e) + \lambda \cdot \max_{e \in E} (D_e \Delta(e))$\;
	\For{$t \leftarrow 1$ \KwTo $T$}{
		$O' \leftarrow$ 	induced subgraph of $O$ on all vertices $e$ with $t > r_e$\;
		\ForEach{$v \in V(G)$}{
			$c(v) \leftarrow u(v)$ \tcp*[r]{set capacity constraints}
		}
		\While{$O'\neq \emptyset$}{
					$U \leftarrow$ a kernel of $O'$\;
					\ForEach{$e \in U$}{
						schedule $e$ in timeslot $t$\;		
						\ForEach{$v \in V(G)$ incident to $e$}{$c(v) \leftarrow c(v)-1$\;}
					}
				remove $U$ from $O$ and $O'$\;
				\ForEach{$v \in V(G)$}{
					\lIf{$c(v)=0$}{remove all edges incident to $v$ from $O'$}
				}
		}
	}
	\caption{Edge Scheduling with general vertex constraints}\label{alg:vertex_constr_edge_scheduling}
\end{algorithm}

\begin{lemma}
	Algorithm \ref{alg:vertex_constr_edge_scheduling} finds a feasible solution for Edge Scheduling on a given hypergraph $G$, s.t.\ every hyperedge $e$ is scheduled not later than $r_e + \lambda D_e \Delta(e)$.
\end{lemma}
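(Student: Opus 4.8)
The plan is to mirror the proof of Lemma~\ref{lem:edge_sched_solution} and adapt it to the capacitated setting. First I would argue that the algorithm is well defined: in each inner (\texttt{While}) iteration, $O'$ is an induced subgraph of the original orientation $O$ (we only ever delete vertices — either by scheduling them or by removing all edges incident to a saturated vertex), so by Lemma~\ref{lem:orient_properties_extended}(2) it has a kernel, and a kernel of a cycle-free digraph can be computed as described in Appendix~\ref{sec:constructing_kernel}. Feasibility of the produced schedule is immediate from the capacity bookkeeping: whenever $c(v)$ reaches $0$, all remaining edges through $v$ are struck from $O'$ for the current time step, so no vertex ever handles more than $u(v)$ edges in one slot; and $O'$ is restricted to released edges, so release times are respected. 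It remains to bound the completion time of an arbitrary hyperedge $e$.

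The key step is the out-degree decay argument, now counting decrements across \emph{inner} iterations rather than time steps. Fix a hyperedge $e$. As in the proof of Lemma~\ref{lem:edge_sched_solution}, once $e$ is released and has $d^+(e)=\emptyset$, in the next inner iteration where $e$ is still present it lies in the kernel $U$ (a vertex with no outgoing arcs is always in the kernel) and gets scheduled. So I must show that across the inner iterations, $e$ accumulates at most $|d^+(e)| \le \lambda(D_e \cdot \avg(e) - 1)$ lost outgoing arcs before it is removed, and that these iterations are spread over at most $\lambda D_e \Delta(e)$ time steps. In each inner iteration in which $e$ survives and is not in the kernel, $e$ has an outgoing arc to some kernel vertex $e'$, and $e'$ is removed at the end of that inner iteration, so $e$ loses at least one outgoing arc. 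Hence after at most $\lambda(D_e\cdot\avg(e)-1)$ such iterations $d^+(e)=\emptyset$, and $e$ is scheduled in the very next inner iteration it participates in.

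Now I translate the inner-iteration count into a time-step bound. The crucial observation is that within a single time step $t$, each vertex $v$ appears in at most $u(v)$ inner iterations before being struck from $O'$ (each scheduled edge through $v$ decrements $c(v)$ by one, and $c(v)$ starts at $u(v)$). An endpoint $v$ of $e$ contributes outgoing arcs of $e$ only as long as $v$ is active, so across one time step $e$ can lose at most $\sum_{v\in e} u(v) = |e|\cdot\avg(e) \le \lambda\,\avg(e)$ outgoing arcs — but more usefully, the number of inner iterations in time step $t$ in which $e$ is still present and actively losing arcs through its \emph{least}-capacity endpoint is at most $\min_{v\in e}u(v)$ per ``round,'' which is what makes $\Delta(e)=\lceil \avg(e)/\min_{v\in e}u(v)\rceil$ the right blow-up factor: after at most $\Delta(e)$ time steps, $e$ has lost at least $\min_{v\in e}u(v)\cdot\Delta(e)\ge \avg(e)$ outgoing arcs, matching one ``unit'' of the bound $\lambda(D_e\avg(e)-1)$. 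Iterating, after at most $\lambda(D_e-1)\Delta(e) + \Delta(e) \le \lambda D_e\Delta(e)$ time steps past $r_e$ we reach $d^+(e)=\emptyset$, and $e$ is scheduled no later than $r_e + \lambda D_e\Delta(e)$.

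I expect the main obstacle to be making the last counting step fully rigorous: the naive bound ``$e$ loses $\le |e|\avg(e)$ arcs per time step, so $\lambda(D_e\avg(e)-1)/(|e|\avg(e))$ time steps suffice'' does not give the clean $\lambda D_e\Delta(e)$ form, because the arcs lost per time step are not uniformly distributed over $e$'s endpoints — a single low-capacity endpoint $v$ can stall progress once its $u(v)$ slots are used up while high-capacity endpoints still have room. The careful argument must track progress \emph{per endpoint}: for each fixed $v\in e$, the arcs of $d_v^+(e)$ are exhausted at a rate of $u(v)$ per time step (once all of $e$'s kernel-neighbours through $v$ in a step are removed, either $e$ is scheduled or all of $v$'s capacity is spent), so it takes at most $\lceil |d_v^+(e)|/u(v)\rceil \le \lceil (D_{k_e}u(v)-1)/u(v)\rceil \le D_{k_e}$ time steps to clear endpoint $v$; summing the clearing times over the $\le\lambda$ endpoints and inserting $\Delta(e)$ to absorb the rounding and the disparity between $\avg(e)$ and $\min_{v\in e}u(v)$ yields the stated bound. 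Getting this bookkeeping exactly right — and confirming that $\Delta(e)$, not some larger quantity, suffices — is the delicate part.
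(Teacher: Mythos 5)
Your high-level plan is right and you do arrive at the correct loss rate $\min_{v\in e}u(v)$ per time slot, but the way you establish it has a directional error, and the ``careful'' per-endpoint accounting you sketch in your final paragraph does not work. What you actually need is a \emph{lower} bound on how many inner (\texttt{While}) iterations $e$ survives within a slot, whereas you keep stating ``at most'' bounds. The clean argument, which is what the paper uses, is this: the kernel $U$ is an independent set in $L(G)$, so at most one hyperedge of $U$ is incident to any fixed $v\in V(G)$, hence $c(v)$ decreases by at most one per inner iteration. Thus at the start of inner iteration $j$ within a slot we have $c(v)\ge u(v)-(j-1)$, so for all $j\le\min_{v\in e}u(v)$ every endpoint of $e$ still has positive capacity and $e$ remains in $O'$. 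Therefore, unless $e$ is scheduled first, it participates in at least $\min_{v\in e}u(v)$ inner iterations and loses at least one outgoing arc in each, for a per-slot loss of at least $\min_{v\in e}u(v)$. Since $\lambda D_e\Delta(e)\cdot\min_{v\in e}u(v)\ge\lambda D_e\cdot\avg(e)>\lambda(D_e\cdot\avg(e)-1)\ge\left|d^+(e)\right|$, the edge $e$ must reach $d^+(e)=\emptyset$ and be scheduled within $\lambda D_e\Delta(e)$ released slots. This is exactly the paper's proof.

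The per-endpoint scheme in your last paragraph (``the arcs of $d_v^+(e)$ are exhausted at a rate of $u(v)$ per time step'') is false: nothing forces the kernel neighbour $e'$ that $e$ points to in a given inner iteration to share the specific vertex $v$ with $e$, so all arcs lost in a slot could be through a single shared endpoint while every other $d_v^+(e)$ is untouched. Moreover, even if that rate held, summing $\left\lceil |d_v^+(e)|/u(v)\right\rceil\le D_{k_e}$ over the at most $\lambda$ endpoints would give roughly $\lambda D_e$, not $\lambda D_e\Delta(e)$ --- the factor $\Delta(e)$ does not appear as ``rounding slack.'' It enters for a single, concrete reason: the out-degree bound of Lemma~\ref{lem:orient_properties_extended} is expressed in terms of $\avg(e)$, while the guaranteed per-slot loss is only $\min_{v\in e}u(v)$, and $\Delta(e)=\left\lceil \avg(e)/\min_{v\in e}u(v)\right\rceil$ is exactly the conversion factor between the two.
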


\begin{proof}
	We note that existance and construction of a kernel is equivalent to the proof of Lemma \ref{lem:edge_sched_solution}.
	
	Let $e$ be any hyperedge of $G$. We prove that after at most $r_e + \lambda D_e \Delta(e)$ time steps it holds that $d^+(e) = 0$ and that there is at least one open slot left for $e$ itself.
	
	By Lemma \ref{lem:orient_properties_extended}, the orientation $O$ fulfills $|d^+(e)| \le \lambda (D_e \cdot \avg(e) - 1)$ in the beginning of Algorithm~\ref{alg:vertex_constr_edge_scheduling}. Edge $e$ is in $O'$ in every iteration $r_e +1, r_e +2, \dots, T$. In every such iteration, we repeatedly search for a kernel of $O'$ until all vertices have no capacities left. One particular edge $e$ remains in $O'$ as long as all its endpoints have available capacity. Accordingly, unless it is already scheduled, $e$ is considered at least $\min_{v \in e} u(v)$ times in every slot $t$.

	If $e \in U$ at any point until iteration $r_e + \lambda D_e \Delta(e)$, then $e$ is scheduled and the claim holds. If, on the other hand, $e \notin U$ for all sub-iterations before that, then $e$ must have an outgoing edge to some $e' \in U$ in every such sub-iteration by the kernel property of $U$. Therefore, $e$ loses at least $\min_{v \in e} u(v)$ outgoing edges in every iteration.
	
	In total, $e$ would lose at least 
	\begin{equation*}
		\lambda D_e \Delta(e) \cdot \min_{v \in e} u(v) \ge \lambda D_e \cdot \avg(e)
	\end{equation*}
	outgoing edges until iteration $r_e + \lambda D_e \Delta(e)$. But since $e$ only has
	\begin{equation*}
		\left|d^+(e)\right| \le \lambda (D_e \cdot \avg(e) - 1) < \lambda D_e \cdot \avg(e)
	\end{equation*}
	such outgoing edges, there is at least one slot left where it holds that $|d^+(e)| = 0$. Hence $e$ is scheduled not later than iteration $r_e + \lambda D_e \Delta(e)$.
\end{proof}

To finally prove Theorem \ref{thm:result_det_path_based_extended}, we follow along the lines of the proof of Theorem~\ref{thm:result_det_path_based}. We estimate the completion time of a job $k$ by its latest edge $e$. Hence,
\begin{equation*}
	C_k = C_e \le r_e + \lambda D_e \Delta(e) \le r_k + \lambda D_k \cdot \max_{e \in E} \Delta(e).
\end{equation*}
For the final estimation we then get
\begin{align*}
	\sum_{k=1}^n w_k C_k 	&\le \sum_{k=1}^n w_k (r_k + \lambda D_k \cdot \max_{e \in E} \Delta(e)) \\
							&\le (2 \lambda \Delta + 1) \cdot \opt(I)
\end{align*}
and note that the case without release times is analogous.

\subsection{Bipartite Coflow Scheduling}
\label{subsec:bipartiteCS}

We now show how our algorithm can be applied to \gls{acr:bcsp}. An instance of \gls{acr:bcsp} considers a bipartite graph $G_I$, each side consisting of $m$ ports. Each coflow $k$ sends $\cijk$ units from input port $i$ to output port $j$. The definitions of weight, release time, and completion time are the same as in Section~\ref{subsec:def_path_based_coflow_scheduling}; each port can handle at most one unit-sized packet of data per time slot; and the objective remains to minimize $\sum_{k=1}^n w_k C_k$.

We define the load of job $k$ on machine $i$ as the sum of all operations on that machine. The load on machine $j$ is defined equivalently:
\begin{equation*}
	\Lik = \sum_{j=1}^n \cijk, \qquad \Ljk = \sum_{i=1}^n \cijk.
\end{equation*}

With this notation, we redefine LP~\eqref{eq:LP_obj} as
\begin{align*}
	\min	&&\sum_{k = 1}^n w_k C_k 	&				&		& \\
	\st		&&C_k						&\ge r_k + \Lik	&\quad	& \forall k \in [n], \forall i \in [m]  \\
			&&C_k						&\ge r_k + \Ljk	&\quad	& \forall k \in [n], \forall j \in [m]  \\
			&&\sum_{k\in S}	\Lik C_k	&\ge f_i(S)		&		& \forall S \subset [n], \forall i \in [m] \\
			&&\sum_{k\in S}	\Ljk C_k	&\ge f_j(S)		&		& \forall S \subset [n], \forall j \in [m] .
\end{align*}

Again, we consider an optimal solution $(C_k^*)_{k \in [n]}$ of the LP which is ordered such that ${C_1^* \le \hdots \le C_n^*}$. Then, Lemma \ref{lem:bound_on_LP_sol} holds without changes and we can analogously define $D_k = 2 \cdot C_k^* \ge \sum_{l=1}^k \Lil$.

In the bipartite case, every operation already corresponds directly to an edge in the graph such that transforming $G_I$ becomes superfluous. Analogously to our general case, we define the release times and deadlines of the edges based on the job the edge belongs to. The orientation is defined as in Algorithm \ref{alg:orientation} and Lemmas \ref{lem:orient_properties} and \ref{lem:edge_sched_solution} hold with $\lambda = 2$; the proofs are analogous. Moreover, the proof of Theorem \ref{thm:result_det_path_based} with $\lambda = 2$ is equivalent so that we can state our result for the bipartite case.
\begin{theorem}
	The \gls{acr:pcsp} algorithm can be applied to \gls{acr:bcsp} and gives a $5$-ap\-pro\-xi\-ma\-tion for arbitrary release times and a $4$-approximation for zero release times.
\end{theorem}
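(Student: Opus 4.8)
The plan is to instantiate the entire machinery of Theorem~\ref{thm:result_det_path_based} for the bipartite special case, where the key observation is that a \gls{acr:bcsp} instance is already \emph{literally} a \gls{acr:pcsp} instance whose every flow runs along a path of exactly two nodes (one input port, one output port), so $\lambda = 2$. Concretely, I would first note that the graph transformation of Section~\ref{sec:edge_scheduling} is vacuous here: each operation $\cijk$ already corresponds to a genuine edge $\{i,j\}$ of the bipartite graph $G_I$, so the hypergraph $G$ equals $G_I$, and a ``matching of hyperedges'' is just an ordinary matching in the bipartite graph, which is exactly the set of operations schedulable in a single time slot under the port-capacity-one constraint. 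Hence the Edge Scheduling problem on $G$ is precisely the scheduling problem underlying \gls{acr:bcsp}.

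Next I would verify that every ingredient of the proof of Theorem~\ref{thm:result_det_path_based} carries over verbatim with $\lambda$ replaced by $2$. The LP is the one displayed just above the statement, with the extra constraint families for the output ports $j$; Lemma~\ref{lem:bound_on_LP_sol} applies unchanged (its proof only uses one constraint family of type~\eqref{const:subset_lower_bound} at a time), giving $D_k := 2C_k^* \ge \sum_{l=1}^k \Lil$ for every port, input or output. Then Algorithm~\ref{alg:orientation} builds the orientation $O$ of the line graph $L(G_I)$, and Lemma~\ref{lem:orient_properties} holds with $\lambda=2$: each edge $e=\{i,j\}$ has at most two endpoints, and summing the bound $|d_v^+(e)| \le D_{k_e}-1$ over $v \in \{i,j\}$ yields $|d^+(e)| \le 2(D_e-1)$; kernel-perfectness is unaffected since $O$ is still acyclic. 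Lemma~\ref{lem:edge_sched_solution} then guarantees Algorithm~\ref{alg:edge_scheduling} schedules every edge by time $r_e + 2 D_e$.

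Finally, I would run the summation exactly as in the proof of Theorem~\ref{thm:result_det_path_based}: for the job-completion time $C_k = C_e \le r_e + 2 D_e = r_k + 4 C_k^*$, so $\sum_k w_k C_k \le \sum_k w_k(r_k + 4C_k^*) \le 5\sum_k w_k C_k^* \le 5\,\opt(I)$, using $r_k \le C_k^*$ from the release-time constraint; and with zero release times $\sum_k w_k C_k \le 4\sum_k w_k C_k^* \le 4\,\opt(I)$. The only thing to be careful about — and arguably the single substantive point rather than a mechanical obstacle — is confirming that the port-capacity-one constraint on \emph{both} sides of the bipartition is correctly modeled by the matching condition on $G_I = G$: an edge set is schedulable in one slot iff no two of its edges share an input port and no two share an output port, i.e. iff it is a matching, and this is exactly the feasibility notion of Edge Scheduling applied to the (non-hyper) graph $G_I$. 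Once that identification is spelled out, everything else is the $\lambda=2$ substitution, so I would simply state that Lemmas~\ref{lem:orient_properties} and~\ref{lem:edge_sched_solution} and the proof of Theorem~\ref{thm:result_det_path_based} go through with $\lambda=2$ and read off the ratios $5$ and $4$.
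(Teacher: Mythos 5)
Your proposal is correct and follows essentially the same route as the paper: both observe that a \gls{acr:bcsp} instance is already a \gls{acr:pcsp} instance with $\lambda=2$, that the hypergraph transformation is trivial (the hypergraph $G$ coincides with the bipartite graph $G_I$), and that Lemmas~\ref{lem:bound_on_LP_sol}, \ref{lem:orient_properties}, and \ref{lem:edge_sched_solution} together with the proof of Theorem~\ref{thm:result_det_path_based} specialize verbatim to give the ratios $5$ and $4$. Your explicit check that the port-capacity-one constraint on both sides of the bipartition coincides with the matching condition in Edge Scheduling is a fair point to spell out, and it matches the paper's (more implicit) treatment.
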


In this context, we clarify that the algorithm of Ahmadi et al.~\cite{AhmadiKhullerEtAl2017} is not applicable to our more general \gls{acr:pcsp}: We recall that they based their approach on a primal-dual analysis of the LP relaxation to receive an order for the jobs. The main idea of their algorithm is a combinatorial shifting of operations from later to earlier time slots based on this job order. They use a result by \cite{QiuSteinEtAl2015} to schedule single jobs within a number of steps equal to their maximum load. 

We now prove that this central lemma does not generalize, even in the case $\lambda = 2$ if the graph is non-bipartite. With our notation this lemma is as follows.
\begin{lemma}[{\cite[Lemma 1]{AhmadiKhullerEtAl2017}}]
\label{lem:single_coflow_schedule}
	There exists a polynomial algorithm that schedules a single coflow job $k$ in $\Lk := \max_{i \in [m]} \Lik$ time steps.
\end{lemma}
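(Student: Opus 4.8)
The statement to prove is a classical edge-coloring type result restricted to bipartite graphs: a single coflow job $k$ induces a bipartite multigraph $B_k$ on the $m$ input and $m$ output ports, where the number of parallel edges between input port $i$ and output port $j$ equals $\cijk$. Scheduling the single coflow in a minimum number of time steps is exactly the problem of properly edge-coloring $B_k$: each color class is a matching, hence a valid set of simultaneously schedulable unit packets, and the number of colors is the number of time steps. The plan is therefore to invoke König's edge-coloring theorem, which states that the chromatic index of a bipartite multigraph equals its maximum degree. The maximum degree of $B_k$ is exactly $\Lk = \max_{i \in [m]} \Lik$, since the degree of input port $i$ is $\sum_{j} \cijk = \Lik$ and the degree of output port $j$ is $\sum_i \cijk = \Ljk$, and the LP's load variables already account for both sides (in the bipartite model both input and output ports are ``machines'', so $\Lk$ as defined ranges over all $2m$ ports). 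Thus $B_k$ admits a proper edge-coloring with $\Lk$ colors.

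The concrete steps I would carry out are: (i) formalize the bipartite multigraph $B_k$ associated with coflow $k$ and observe the degree-to-load correspondence; (ii) state König's edge-coloring theorem for bipartite multigraphs and note that its proof is constructive — one can repeatedly extract a perfect matching on the set of maximum-degree vertices (via, e.g., the Hungarian algorithm or repeated augmenting-path computations) and peel it off, or equivalently use the standard recursive Euler-split argument that runs in polynomial time; (iii) conclude that the $\Lk$ color classes, read as time slots $1,\dots,\Lk$, yield a feasible schedule for job $k$ in $\Lk$ steps, and that the whole construction is polynomial in $m$ and $\log(\max_{i,j}\cijk)$ (or polynomial in the unary-encoded data sizes, matching the paper's convention of treating $\cijk$ identical packets).

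The only subtlety — not really an obstacle — is bookkeeping the polynomial running time when the $\cijk$ are large: a naive color-by-color peeling would take $\Lk$ matching computations, which is fine if $\Lk$ is polynomial but should be stated carefully if one wants strict polynomiality in the binary encoding. I would handle this by citing the result of Qiu, Stein, and Zhong~\cite{QiuSteinEtAl2015} (as Ahmadi et al.\ do) or by noting that the Euler-split recursion achieves the bound in time polynomial in the input size. The mathematical heart of the lemma is simply König's theorem applied to the bipartite structure, so the proof is short; the main point to emphasize is \emph{why} it is bipartiteness that is essential — for non-bipartite multigraphs Vizing's theorem only gives $\Delta + 1$ and odd structures can force strictly more than $\Delta$ colors, which is precisely the gap the paper goes on to exploit to show the lemma fails for $\lambda = 2$ on general graphs.
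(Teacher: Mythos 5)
The paper does not prove this lemma itself; it is stated purely as a citation of Ahmadi et al.\ \cite{AhmadiKhullerEtAl2017}, who in turn rely on Qiu, Stein, and Zhong \cite{QiuSteinEtAl2015}. The paper's only original contribution around the lemma is the pair of counterexamples showing it fails once bipartiteness is dropped. Your proof via K\"onig's edge-coloring theorem on the bipartite demand multigraph is the correct and standard argument underlying the cited result: the demand matrix of coflow $k$ defines a bipartite multigraph whose maximum degree equals $\Lk$ (with $\Lk$ understood to range over the loads of \emph{both} input and output ports, as you rightly clarify --- the notation $\max_{i\in[m]}\Lik$ in the statement is slightly loose on this point), K\"onig's theorem gives a proper edge-coloring with exactly $\Delta$ colors, and each color class is a matching and hence a legal time slot. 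You also correctly flag the polynomial-time subtlety when the $\cijk$ are large in binary encoding (handled by Euler-split recursion or by citing \cite{QiuSteinEtAl2015} directly), and you correctly identify bipartiteness as the load-bearing hypothesis, which is exactly what the triangle and $\lambda$-partite counterexamples in the paper are designed to break. Your proposal is correct and fully consistent with, though more detailed than, the paper's treatment, which defers the proof to the cited references.
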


\begin{figure}[ht]
	\centering
	\begin{tikzpicture}[main node/.style={draw,circle,inner sep=2pt,fill,thick},>=latex]
	\node[main node] (1) at (0,0) {};
	\node[main node] (2) at (2,0) {};
	\node[main node] (3) at (1,2) {};
	\draw (1) -- (2);
	\draw (1) -- (3);
	\draw (2) -- (3);
	\draw[ultra thick,green!50!black,->] (1) -- node[above] {1} 		(2);
	\draw[ultra thick,green!50!black,<-] (2) -- node[above right] {1} 	(3);
	\draw[ultra thick,green!50!black,->] (3) -- node[above left] {1} 	(1);
	\end{tikzpicture}
	\caption{Example graph with $\lambda = 2$ for the case of non-bipartite graphs}
	\label{pic:counterexample_generalization}
\end{figure}
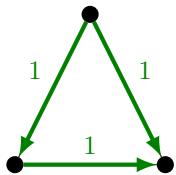

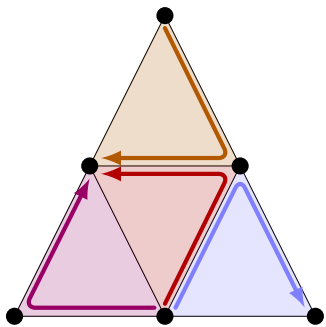
\begin{figure}[ht]
	\centering
		\begin{tikzpicture}[main node/.style={draw,circle,inner sep=2pt,fill,thick},>=latex]
		\coordinate (B2) at (1,0) {};
		\coordinate (A2) at (3,4) {};
		\coordinate (C1) at (2,2) {};
		\coordinate (A1) at (3,0) {};
		\coordinate (B1) at (4,2) {};
		\coordinate (C2) at (5,0) {};
		\coordinate (B2) at (1,0) {};
		\coordinate (A2) at (3,4) {};
		\coordinate (C1) at (2,2) {};
		\coordinate (A1) at (3,0) {};
		\coordinate (B1) at (4,2) {};
		\coordinate (C2) at (5,0) {};
		\draw (A1) -- (B1);
		\draw (A1) -- (C1);
		\draw (A1) -- (B2);
		\draw (A1) -- (C2);
		\draw (B1) -- (A2);
		\draw (B1) -- (C1);
		\draw (B1) -- (C2);
		\draw (C1) -- (B2);
		\draw (C1) -- (A2);
		\fill[color=blue!50!white,opacity=.2] (A1) -- (B1) -- (C2) -- cycle;
		\fill[color=red!60!blue,opacity=.2] (A1) -- (B2) -- (C1) -- cycle;
		\fill[color=red!70!black,opacity=.2] (A1) -- (B1) -- (C1) -- cycle;
		\fill[color=orange!70!black,opacity=.2] (A2) -- (B1) -- (C1) --cycle;
		\node[main node] (NA1) at (A1) {};
		\node[main node] (NB1) at (B1) {};
		\node[main node] (NC1) at (C1) {};
		\node[main node] (NA2) at (A2) {};
		\node[main node] (NB2) at (B2) {};
		\node[main node] (NC2) at (C2) {};
		\draw[cap=round,rounded corners,ultra thick,->,blue!50!white] ([xshift=4pt,yshift=0pt]NA1.north) --  ([xshift=0pt,yshift=-1.5pt]NB1.south) -- ([xshift =-4pt,yshift=0pt]NC2.north);
		\draw[cap=round,rounded corners,ultra thick,->,red!60!blue] ([xshift=-4pt,yshift=0pt]NA1.north) -- ([xshift=4pt,yshift=0pt]NB2.north) -- ([yshift=-1.5pt]NC1.south);
		\draw[cap=round,rounded corners,ultra thick,->,red!70!black] ([yshift=1.5pt]NA1.north) -- ([xshift=-1pt,yshift=-3pt]NB1.west) -- ([xshift=1pt,yshift=-3pt]NC1.east);
		\draw[cap=round,rounded corners,ultra thick,->,orange!70!black] ([xshift=0pt,yshift=-1.5pt]NA2.south) -- ([xshift=-1pt,yshift=3pt]NB1.west) -- ([xshift=1pt,yshift=3pt]NC1.east);
		\end{tikzpicture}
	\caption{Example graph with $\lambda = 3$ for the case of $\lambda$-partite hypergraphs}
	\label{pic:counterexample_k-partite}
\end{figure}

We consider a simple example graph (see Figure~\ref{pic:counterexample_generalization}) consisting of three vertices connected by three edges which form a triangle. The single coflow $k$ on this graph is defined by an operation $f_e^{(k)}$ on each edge $e$ with $c_e^{(k)} = 1$. The load on any vertex of the graph is equal to two, hence $\Lk = 2$. However, since the edges form a triangle, three steps are needed to feasibly schedule the entire job and Lemma \ref{lem:single_coflow_schedule} does not hold.

Additionally, the lemma does not hold when we generalize the bipartiteness condition to $\lambda$-partite hypergraphs, which arise in \gls{acr:pcsp} with $\lambda>2$. Figure~\ref{pic:counterexample_k-partite} shows a counterexample that consists of one coflow $k$ with four flows. Each flow sends one unit of data along a path with three vertices ($\lambda=3$). The corresponding hypergraph is $3$-partite with the start vertices, middle vertices, and end vertices of the flow paths forming the three disjoint vertex sets. Flows and their corresponding hyperedges have the same color. Because any two hyperedges have a common vertex, any feasible schedule requires at least four time steps. This contradicts Lemma \ref{lem:single_coflow_schedule} as the maximum load is only $\Lk=3$.

\section{Conclusion}

In this paper, we introduced the \glsentrylong{acr:pcsp} problem with release dates that arises in the context of today's distributed computing projects. We presented a $(2\lambda +1)$-approximation algorithm for homogeneous unit-sized node capacities. For zero release times this result improves to a $(2\lambda)$-ap\-pro\-xi\-ma\-tion. We validated our results by showing that \gls{acr:pcsp} admits no constant approximation ratio unless P = NP. We generalized our algorithm to arbitrary node constraints with a $(2\lambda \Delta +1)$- and a $(2\lambda \Delta)$-approximation in the case of general and zero release times. Here, $\Delta$ captures the capacity disparity between nodes. Furthermore, we showed that our algorithm is applicable to a wide range of problem variants, often matching the state of the art, e.g., for \glsentrylong{acr:bcsp} and \glsentrylong{acr:cosp}.

Further work is required in closing the gaps between the presented ratios and the lower bound of $2$ given by the reduction to \gls{acr:cosp}, which is not tight for fixed $\lambda \ge 2$. It is likely that our robust approach using orientations to sort the operations in the scheduling part of our algorithm can be further improved with new ideas.

Finally, we remark that it might be possible to modify and extend the approach of \cite{AhmadiKhullerEtAl2017} to our general framework, since the given counterexamples only contradict Lemma \ref{lem:single_coflow_schedule} but do not yield a worse approximation ratio. We also leave this question open for future research to deliberate.

\newpage
\bibliographystyle{abbrv}
\bibliography{Path_based_Coflow_Scheduling}

\newpage
\appendix

\section{Equivalence of Edge Constraints and Node Constraints}
\label{sec:equ_edge_node_constr}

In this section we show how to simulate edge capacities as used in \cite{ChowdhuryKhullerEtAl2019,JahanjouKantorEtAl2017} by node capacities and vice versa. 

For the forward direction, we assume a graph $G$ with capacities on every edge as well as a set of coflows defined by operations on given paths. The reduction works as follows: We split every edge $e$ in the middle and add a node $v_e$ with the corresponding capacity of the split edge. All other nodes are assigned infinite capacity. Evidently, this transformation is polynomial and correct. 

In Section \ref{subsec:contribution} we argue that our algorithm can solve circuit-based coflows with unit-capacities with an approximation ratio smaller than $17.6$ for all $\lambda \le 9$. However, the above reduction increases the lengths of the given paths in our problem, in particular it increases the parameter $\lambda$, on which the approximation factor of our algorithm depends. In the following, we show that the approximation ratio is still less than $17.6$ if $\lambda \le 9$, based on the value of $\lambda$ in the original graph $G$.

Let the set of machines be partitioned into two sets of nodes, $[m] = I_{<\infty} \dot{\cup} I_{\infty}$, where $I_{<\infty}$ is the set of machines with finite capacity and $I_{\infty}$ is the set of machines with infinite capacity. By looking at the LP relaxation \eqref{eq:LP_obj}, we see that the constraints for a node $i \in I_{\infty}$ do not need to be added to the LP since they do not limit the completion times of the operations. In fact, if all nodes had infinite capacity, one constraint of the form $C_k \ge r_k + 1$ for all $k$ would suffice to describe the polyhedron completely.

Therefore, we only need to consider the nodes in $I_{<\infty}$ for the definition of deadlines $D_k$. Additionally, we can simplify the construction of the line graph in Section \ref{sec:edge_scheduling} such that an edge in $L$ between two vertices $e$ and $f$ is only added if $e$ and $f$ share a node with finite capacity.

We redefine $\lambda_{<\infty}$ as the maximum number of \emph{finite} nodes in a longest flow-bearing path in the graph. Then Lemmas \ref{lem:orient_properties} and \ref{lem:edge_sched_solution} hold with this new definition, since the deadlines $D_k$ were defined using only such finite nodes. Finally, Theorem \ref{thm:result_det_path_based} can be amended such that there exists a $(2\lambda_{<\infty} + 1)$-approximation and $2\lambda_{<\infty}$-approximation, respectively.

For our reduction described above, we see that for every path in the original graph $G$, the number of finite nodes in the reduced setting is exactly equal to the number of edges in the path. Hence, for a given problem instance with parameter $\lambda$, our algorithm gives us a $(2(\lambda-1) + 1)$-approximation and $2(\lambda-1)$-approximation, respectively. Therefore, if $\lambda \le 9$, the ratio of our algorithm is smaller than $17.6$.

To show that node capacities can be simulated via edge capacities, we refer to \cite{ChowdhuryKhullerEtAl2019}. There, it is stated that this can be done by replacing every node by a gadget consisting of two nodes and setting the capacity of the new edge as the capacity of the old node.

\section{Constructing a Kernel in a Cycle-free Directed Graph}
\label{sec:constructing_kernel}

We present a simple algorithm on how to find a kernel $U$ in a directed graph $O = (V,E)$ without directed cycles in Algorithm \ref{alg:find_kernel}.

\begin{algorithm}[htb]
	$U \leftarrow \emptyset$\;
	\While{$O$ is non-empty}{
		$U_0 \leftarrow \{v \in V: v$ has out-degree 0$\}$\;
		$U \leftarrow U \cup U_0$\;
		let $N(U_0)$ be the nodes with an arc towards some node in $U_0$\;
		remove $U_0$, $N(U_0)$ and all adjacent arcs from $O$\;
	}
	\Return U;
	\caption{Finding a kernel in a directed graph without cycles} \label{alg:find_kernel}
\end{algorithm}

The runtime of this algorithm is clearly polynomial, since all nodes are considered at most once. Nodes with out-degree 0 can be found in linear time.

For correctness, we first verify the termination of the algorithm. If $O$ is non-empty, then the set $U_0$ must contain at least one node. Assume $U_0 = \emptyset$, then every node has at least one outgoing arc which implies the existence of a directed cycle. Hence, at least one node gets removed from $O$ in every iteration, until $O$ is empty.

Now consider the properties of a kernel $U$ from Definition \ref{def:kernel}: it is an independent set and for all $v \notin U$ there exists an arc from $v$ to some node in $U$. The set $U$ as constructed by the algorithm above is clearly independent, since nodes of out-degree 0 cannot be adjacent themselves and all other adjacent nodes of $U_0$ are removed from $O$ in every iteration.

For the second property, consider $v \notin U$. Then, $v$ was in $N(U_0)$ for some iteration $i$ of the algorithm by the termination property. By definition, $v$ has an outgoing arc towards $U_0 \subset U$ in iteration $i$. Therefore, the above algorithm correctly returns a kernel of $O$ in polynomial time.

\section{Minor Improvement of Approximation Ratio}
\label{sec:improvements}

The inequalities used in the proofs of Lemma~\ref{lem:bound_on_LP_sol}, Lemma~\ref{lem:edge_sched_solution}, and Theorem~\ref{thm:result_det_path_based} leave some room for minor improvements of the approximation guarantees. In the following, we show how to modify these proofs in order to obtain a slightly improved version of Theorem~\ref{thm:result_det_path_based}:

\begin{theorem}
\label{thm:result_det_path_based_revisited}
Let $n$ be the number of jobs. There exists a $\left(\frac{2n}{n+1}\lambda + 1\right)$-approximation for \gls{acr:pcsp} with arbitrary release times and a $\left(\frac{2n}{n+1}\lambda\right)$-approximation for \gls{acr:pcsp} when all release times are smaller than $\lambda$.
\end{theorem}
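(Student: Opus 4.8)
The plan is to revisit the three places where the original proof of Theorem~\ref{thm:result_det_path_based} loses a factor and tighten each one. The overall structure is unchanged: solve the LP, define deadlines, orient the line graph, run the edge scheduling algorithm, and compare the resulting weighted completion time to $\opt(I)$. The improvement comes from replacing the crude estimate $D_k = 2C_k^*$ by a slightly smaller multiple of $C_k^*$, which is justified by a sharper version of Lemma~\ref{lem:bound_on_LP_sol}.

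First I would sharpen Lemma~\ref{lem:bound_on_LP_sol}. In its proof, the bound $\sum_{l=1}^k \Lil C_l^* \ge f_i(S) \ge \tfrac12(\sum_{l=1}^k\Lil)^2$ throws away the term $\tfrac12\sum_{l\in S}(\Lil)^2$ in the definition of $f_i(S)$. Keeping it, and using that the operations are ordered so that $C_l^* \le C_k^*$ for $l \le k$, one gets
\begin{align*}
C_k^* \sum_{l=1}^k \Lil \ \ge\ \sum_{l=1}^k \Lil C_l^* \ \ge\ f_i(S) \ =\ \frac12\left(\sum_{l=1}^k (\Lil)^2 + \Big(\sum_{l=1}^k \Lil\Big)^2\right),
\end{align*}
and since by Cauchy--Schwarz $\sum_{l=1}^k(\Lil)^2 \ge \tfrac1k(\sum_{l=1}^k\Lil)^2$ (or, more carefully, a telescoping argument over the $n$ jobs), this yields $C_k^* \ge \tfrac{n+1}{2n}\sum_{l=1}^k \Lil$ in the worst case. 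Equivalently, defining $D_k := \tfrac{2n}{n+1}C_k^*$ still gives $D_k \ge \sum_{l=1}^k \Lil$, which is exactly the property~\eqref{eq:lower_bound_deadlines} that Lemma~\ref{lem:orient_properties} and Lemma~\ref{lem:edge_sched_solution} rely on. Hence those two lemmas go through verbatim with the new, smaller $D_k$, and the edge scheduling algorithm still places every edge $e$ by time $r_e + \lambda D_e$.

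Next I would redo the final summation in the proof of Theorem~\ref{thm:result_det_path_based} with the new deadlines: $\sum_k w_k C_k \le \sum_k w_k(r_k + \lambda D_k) = \sum_k w_k(r_k + \tfrac{2n}{n+1}\lambda C_k^*) \le (\tfrac{2n}{n+1}\lambda + 1)\opt(I)$, using $r_k \le C_k^*$ from~\eqref{const:single_lower_bound}. For the release-time-free statement, the observation is that if every $r_k < \lambda$, then $r_k \le \lambda - 1 \le \lambda D_k - $ (something), so the release times can be absorbed: more precisely, one shifts the whole schedule or notes that an edge with $r_e < \lambda$ and deadline $D_e \ge 1$ is still scheduled by time $\lambda D_e$ since the extra $r_e < \lambda \le \lambda D_e$ slack is dominated once one re-examines the counting in Lemma~\ref{lem:edge_sched_solution} (the kernel argument needs only $\lambda(D_e-1)$ productive steps plus one, and $r_e < \lambda$ contributes at most $\lambda - 1$ additional forced-idle steps, which can be charged against the $-\lambda$ slack in $\lambda(D_e-1)$ versus $\lambda D_e$). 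This gives the $\tfrac{2n}{n+1}\lambda$ bound.

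The main obstacle I anticipate is the sharpened Cauchy--Schwarz/telescoping step: getting the clean constant $\tfrac{n+1}{2n}$ rather than something weaker requires summing the inequality over all $k$ with the weights arranged so that the $\sum (\Lil)^2$ terms telescope correctly against the squared partial sums — this is the Wolsey/Queyranne-style scheduling inequality, and one has to be careful that the worst case is genuinely attained when all loads are equal, which is what pins the constant at exactly $\tfrac{n+1}{2n}$. The second delicate point is the precise accounting that lets release times smaller than $\lambda$ be folded into the $\lambda D_e$ bound without an extra additive term; this needs a careful re-reading of the two-case analysis in the proof of Lemma~\ref{lem:edge_sched_solution} rather than a black-box application of it.
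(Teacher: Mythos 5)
Your proposal matches the paper's proof essentially step for step: you sharpen Lemma~\ref{lem:bound_on_LP_sol} by keeping the $\sum(\Lil)^2$ term and applying Cauchy--Schwarz, you tighten Lemma~\ref{lem:edge_sched_solution} by using $\lambda(D_e-1)+1=\lambda D_e-(\lambda-1)$ instead of discarding the slack, and you observe that $r_k<\lambda$ makes the additive $r_k-(\lambda-1)$ term non-positive, which is exactly the paper's argument. One small clarification worth making: no telescoping sum over $k$ is actually needed to pin down the constant — the per-$k$ Cauchy--Schwarz inequality $\sum_{l\le k}(\Lil)^2\ge\frac{1}{k}(\sum_{l\le k}\Lil)^2$ already yields $C_k^*\ge\frac{k+1}{2k}\sum_{l\le k}\Lil$ directly for each fixed $k$, and the bound $\frac{k+1}{2k}\ge\frac{n+1}{2n}$ is only invoked once at the end of the weighted summation (the paper keeps the per-job deadlines $D_k=\frac{2k}{k+1}C_k^*$ rather than your uniform $\frac{2n}{n+1}C_k^*$, but both satisfy $D_k\ge\sum_{l\le k}\Lil$ and lead to the same final ratio).
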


We start by revisiting Lemma~\ref{lem:bound_on_LP_sol} and proving a slightly more precise version.

\begin{lemma}
\label{lem:bound_on_LP_sol_revisited}
	For all jobs $k \in [n]$ and all machines $i \in [m]$ the following holds:
	\begin{equation*}
		C_k^* \ge \frac{k+1}{2k} \cdot \sum_{l=1}^k \Lil.
	\end{equation*}
\end{lemma}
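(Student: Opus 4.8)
The plan is to refine the proof of Lemma~\ref{lem:bound_on_LP_sol} by tracking the constant more carefully. Recall that in that proof one applies constraint~\eqref{const:subset_lower_bound} with $S = \{1,\dots,k\}$ to obtain $\sum_{l=1}^k \Lil C_l^* \ge f_i(S) = \tfrac12\left(\sum_{l=1}^k (\Lil)^2 + (\sum_{l=1}^k \Lil)^2\right)$, and then lower-bounds this by $\tfrac12(\sum_{l=1}^k \Lil)^2$, simply discarding the term $\sum_{l=1}^k (\Lil)^2$. The improvement comes from \emph{not} discarding it: by the Cauchy--Schwarz (or power-mean) inequality, $\sum_{l=1}^k (\Lil)^2 \ge \tfrac1k (\sum_{l=1}^k \Lil)^2$, so $f_i(S) \ge \tfrac12\left(\tfrac1k + 1\right)(\sum_{l=1}^k \Lil)^2 = \tfrac{k+1}{2k}(\sum_{l=1}^k \Lil)^2$.

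The remaining steps mirror the original proof exactly. First I would observe that, since $C_1^* \le \dots \le C_k^*$, we have $C_k^* \ge \frac{\sum_{l=1}^k \Lil C_k^*}{\sum_{l=1}^k \Lil} \ge \frac{\sum_{l=1}^k \Lil C_l^*}{\sum_{l=1}^k \Lil}$ (weighted-average argument, valid when $\sum_l \Lil > 0$; the degenerate case $\sum_l \Lil = 0$ makes the claimed inequality trivial). Then I would substitute the improved lower bound on the numerator:
\begin{equation*}
  C_k^* \;\ge\; \frac{\sum_{l=1}^k \Lil C_l^*}{\sum_{l=1}^k \Lil} \;\ge\; \frac{f_i(S)}{\sum_{l=1}^k \Lil} \;\ge\; \frac{\frac{k+1}{2k}\left(\sum_{l=1}^k \Lil\right)^2}{\sum_{l=1}^k \Lil} \;=\; \frac{k+1}{2k}\sum_{l=1}^k \Lil,
\end{equation*}
which is precisely the statement to be proved.

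There is no real obstacle here — the argument is a routine sharpening. The only thing to be careful about is the application of Cauchy--Schwarz in the right direction: we need the lower bound $\sum_{l=1}^k x_l^2 \ge \tfrac1k(\sum_{l=1}^k x_l)^2$ for nonnegative reals $x_l = \Lil$, which is the standard QM--AM inequality and holds with equality when all $\Lil$ are equal. One should also note that the factor $\tfrac{k+1}{2k}$ is decreasing in $k$, so over all $k \in [n]$ the worst (smallest) constant is $\tfrac{n+1}{2n}$; this is what ultimately produces the $\tfrac{2n}{n+1}$ factor in Theorem~\ref{thm:result_det_path_based_revisited} once combined with analogous refinements in the proofs of Lemma~\ref{lem:edge_sched_solution} and Theorem~\ref{thm:result_det_path_based}.
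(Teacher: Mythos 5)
Your proposal is correct and follows exactly the same route as the paper's proof: apply constraint~\eqref{const:subset_lower_bound} with $S=\{1,\dots,k\}$, keep the $\sum_l (\Lil)^2$ term, bound it from below by $\tfrac1k(\sum_l \Lil)^2$ via Cauchy--Schwarz, and finish with the weighted-average argument. The brief remark about the degenerate case $\sum_l \Lil = 0$ is a small extra care not explicitly present in the paper but does not change the argument.
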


\begin{proof}
	Let $k \in [n], i \in [m], S = \{1,\dots,k\}$. Since $C_1^*,\dots,C_n^*$ is a feasible solution of the LP, it must fulfill constraint \eqref{const:subset_lower_bound} of the LP for $S$ and $i$:
	\[	\sum_{l=1}^k \Lil C_l^* = \sum_{l\in S} \Lil C_l^*\ge f_i(S)= \frac{1}{2} \cdot \left(\sum_{l=1}^k \left(\Lil\right)^2 + \left(\sum_{l=1}^k \Lil \right)^2 \right). \]
	Applying the Cauchy-Schwarz inequality on the first summand yields
	\[k \cdot \sum_{l=1}^k \left(\Lil\right)^2 = \sum_{l=1}^k 1^2 \cdot \sum_{l=1}^k \left(\Lil\right)^2  \ge \left(\sum_{l=1}^k \Lil \right)^2. \]
	Thus, we obtain the following lower bound for the completion time of job $k$:
	\begin{align*}
		C_k^* 	&\ge \frac{\sum_{l=1}^k \Lil C_k^*}{\sum_{l=1}^k \Lil} \\
						&\ge \frac{\sum_{l=1}^k \Lil C_l^*}{\sum_{l=1}^k \Lil} \\
						&\ge \frac{\left(\sum_{l=1}^k \left(\Lil\right)^2 + \left(\sum_{l=1}^k \Lil \right)^2 \right)}{2\cdot\sum_{l=1}^k \Lil} \\
						&\ge \frac{\left(\frac{1}{k} \cdot\left(\sum_{l=1}^k \Lil \right)^2 + \left(\sum_{l=1}^k \Lil \right)^2 \right)}{2\cdot\sum_{l=1}^k \Lil} \\
						&\ge \frac{(k+1)\cdot \left(\sum_{l=1}^k \Lil \right)^2}{2k\cdot\sum_{l=1}^k \Lil} \\
						&\ge \frac{k+1}{2k} \cdot\sum_{l=1}^k \Lil.
	\end{align*}
\end{proof}

This allows us to set the deadline for job $k$ to $D_k:=\frac{2k}{k+1} \cdot C_k^* $ instead of $2 \cdot C_k^*$ in equation~(\ref{eq:lower_bound_deadlines}).  These new deadlines later effect the approximation factor to be strictly smaller than $(2\lambda + 1)$, and $2\lambda$, respectively.

For the second improvement, which is the extension of the special case from zero release times to release times smaller than $\lambda$, we focus on a detail in the proof of Lemma~\ref{lem:edge_sched_solution}. In the end of the proof, we stated that after at most $\lambda(D_e-1)\le \lambda D_e-1$ iterations, we have $|d^+(e)|=0$. According to the first part of this inequality, we can restate Lemma~\ref{lem:edge_sched_solution} with a slightly stricter conclusion.

\begin{lemma}
\label{lem:edge_sched_solution_revisited}
Algorithm \ref{alg:edge_scheduling} finds a feasible solution for Edge Scheduling on a given hypergraph $G$, s.t.\ every hyperedge $e$ is scheduled not later than $r_e + \lambda D_e-(\lambda-1)$.
\end{lemma}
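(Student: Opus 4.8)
The plan is to revisit the final case analysis in the proof of Lemma~\ref{lem:edge_sched_solution} and simply keep track of the sharper bound that was already latent there. Recall that in that proof, the argument splits into the case $e \in U$ (done immediately) and the case $e \notin U$, where $e$ loses at least one outgoing arc per iteration. Since the orientation $O$ satisfies $|d^+(e)| \le \lambda(D_e - 1)$ at the start (Lemma~\ref{lem:orient_properties}), after at most $\lambda(D_e - 1)$ iterations in which $t > r_e$ we have $|d^+(e)| = 0$, and $e$ is scheduled in the very next such iteration at the latest. The original lemma then loosened $\lambda(D_e - 1) + 1 = \lambda D_e - (\lambda - 1)$ to the cleaner $\lambda D_e$; here I simply retain the tighter expression.

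Concretely, I would structure the proof as follows. First, note that feasibility and the kernel-existence argument are verbatim identical to the proof of Lemma~\ref{lem:edge_sched_solution} — nothing about the bound on the scheduling time affects them. Second, reiterate the dichotomy: for a fixed hyperedge $e$, over the iterations $t = r_e + 1, r_e + 2, \dots$ (in all of which $e \in O'$), either $e$ enters a kernel $U$ and is scheduled, or it has an outgoing arc to some $e' \in U$ which is then removed, costing $e$ at least one outgoing arc. Third, combine with $|d^+(e)| \le \lambda(D_e - 1)$: after at most $\lambda(D_e - 1)$ of the iterations with $t > r_e$, the out-degree of $e$ is zero, so in iteration number $r_e + \lambda(D_e - 1) + 1 = r_e + \lambda D_e - (\lambda - 1)$ at the latest, $e$ lies in the kernel and is scheduled. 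That is exactly the claimed bound.

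I do not expect any genuine obstacle here — this is a bookkeeping refinement rather than a new idea. The one point to be careful about is the off-by-one accounting between "number of arc-losing iterations" and "iteration index at which $e$ is finally scheduled": $e$ may need all $\lambda(D_e-1)$ iterations to shed its outgoing arcs, and only in the following iteration is it guaranteed to be in the kernel, which is why the bound is $r_e + \lambda(D_e-1) + 1$ and not $r_e + \lambda(D_e-1)$. As long as that single extra iteration is tracked correctly, the proof goes through. One could also remark, as the surrounding appendix text does, that combining this with the improved deadlines $D_k = \frac{2k}{k+1} C_k^*$ from Lemma~\ref{lem:bound_on_LP_sol_revisited} and with release times bounded by $\lambda$ (so that $r_e + \lambda D_e - (\lambda - 1) \le \lambda D_e$ when $r_e \le \lambda - 1$) yields the stated ratios in Theorem~\ref{thm:result_det_path_based_revisited}, but the lemma itself requires only the book-keeping above.
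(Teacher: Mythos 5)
Your proposal is correct and takes essentially the same approach as the paper. The paper likewise observes that the loosening $\lambda(D_e - 1) \le \lambda D_e - 1$ in the proof of Lemma~\ref{lem:edge_sched_solution} was the lossy step, and retaining the sharper bound $\lambda(D_e - 1)$ plus one further iteration in which $e$ must enter the kernel gives exactly $r_e + \lambda(D_e - 1) + 1 = r_e + \lambda D_e - (\lambda - 1)$.
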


For the proof of Theorem~\ref{thm:result_det_path_based_revisited}, we proceed analogously to the proof of Theorem~\ref{thm:result_det_path_based}. Hence, we only need to modify the final computations for the approximation guarantee. Applying Lemma~\ref{lem:edge_sched_solution_revisited}, we obtain that for the final completion time $C_k$ of job $k$ in the solution provided by our algorithm
\[C_k\le r_k+\lambda D_k -(\lambda-1)\]
holds for all $k\in[n]$. Summing over all jobs yields

\begin{align}
	\sum_{k=1}^n w_k C_k 	&\le \sum_{k=1}^n w_k \left(r_k+\lambda D_k -(\lambda-1)\right) \nonumber\\
													&= \sum_{k=1}^n w_k \left(r_k+\lambda \cdot \frac{2k}{k+1} \cdot C_k^* -(\lambda-1)\right) \nonumber\\
													&=  \sum_{k=1}^n w_k\cdot \lambda \cdot \frac{2k}{k+1} \cdot C_k^*  + \sum_{k=1}^n w_k \left(r_k-\lambda+1\right) \label{eq:calc_approx_guarantee_revisited}\\
													&\le \frac{2n}{n+1}\cdot \lambda \cdot \sum_{k=1}^n w_k\cdot C_k^* + \sum_{k=1}^n w_k\cdot C_k^* \nonumber\\
													&\le \left(\frac{2n}{n+1}\cdot\lambda+1\right) \cdot \opt(I).\nonumber
\end{align}
This proves the first part of Theorem~\ref{thm:result_det_path_based_revisited}. For the case that $r_k<\lambda$ for all jobs $k\in[n]$, we have $r_k-\lambda+1\leq 0$, and, hence, the second sum in~(\ref{eq:calc_approx_guarantee_revisited}) is non-positive. Consequently, we have
\begin{align*}
	\sum_{k=1}^n w_k C_k 	&\le \sum_{k=1}^n w_k\cdot \lambda \cdot \frac{2k}{k+1} \cdot C_k^*  + \sum_{k=1}^n w_k \left(r_k-\lambda+1\right)\\
													&\le \frac{2n}{n+1}\cdot \lambda \cdot \sum_{k=1}^n w_k\cdot C_k^*\\
													&\le \frac{2n}{n+1}\cdot\lambda \cdot \opt(I),
\end{align*}
which establishes the second statement of Theorem~\ref{thm:result_det_path_based_revisited}. Note that $\frac{2n}{n+1}<2$ for any $n\in\mathbb{N}$, which means Theorem~\ref{thm:result_det_path_based_revisited} slightly improves the result of Theorem~\ref{thm:result_det_path_based}.

\end{document}